\documentclass{siamart}

% --- Math and fonts ---
\usepackage{amsmath,amssymb,amsfonts}
\usepackage{mathrsfs}

% --- Figures and tables ---
\usepackage{graphicx}
\usepackage{booktabs}
\usepackage{multirow}

% --- Colors (optional, SIAM-safe) ---
\usepackage{xcolor}

% --- Appendices ---
\usepackage{appendix}

% --- Operators ---
\DeclareMathOperator{\Tr}{Tr}
\DeclareMathOperator{\supp}{supp}

\newsiamthm{remark}{Remark}
\newsiamthm{example}{Example}
\newtheorem{assumption}{Assumption}

\begin{document}

\title{Quadratic Stability of Entropy Minimizers under Block-Separable Convex Constraints}

\author{
Hassan Nasreddine
\thanks{Independent Researcher.
Email: \texttt{hassan.nasreddine@hotmail.com}}
}
\maketitle

\begin{abstract}

We study quadratic stability of entropy minimization under convex constraints
with block-separable structure. For constraint sets of the form
$\mathcal{C}=\{\bigoplus_i p_i\rho_i : p\in\Pi,\ \rho_i\in\mathcal{C}_i\}$, we
establish explicit stability estimates showing that, under a confining
(fixed-support) hypothesis, the entropy gap controls the squared trace-norm
distance to the set of entropy minimizers, with constants determined by the
geometry of the constraint.

The stability constant admits a natural decomposition into marginal and
conditional components: the marginal contribution is governed by the curvature
of Shannon entropy on the marginal polytope $\Pi$ at extreme points, while the
conditional contribution is governed by the curvature of von Neumann entropy at
entropy-minimizing states within each block.

We prove that the quadratic exponent is optimal by constructing explicit examples
for which no linear stability bound holds uniformly. Exploiting the
block-separable structure, the analysis reduces the global stability problem to
independent marginal and conditional subproblems, yielding a geometric
characterization of the stability constants in terms of constraint curvature.
This stability phenomenon cannot be derived from Pinsker--type inequalities or
standard entropy continuity bounds, since no reference state is fixed and the
minimizer emerges intrinsically from the constraint geometry.

\end{abstract}

\begin{keywords}
convex optimization, entropy minimization, quadratic stability,
block-separable constraints, error bounds, explicit constants
\end{keywords}

\begin{AMS}
90C25, 94A17, 49J52, 49J45
\end{AMS}

\section{Introduction}

Entropy minimization under structural constraints is a fundamental problem in
functional analysis, statistical mechanics, and information theory. In quantum
settings, entropy quantifies uncertainty and mixing, and entropy minimizers
typically encode extremal or highly rigid structure determined by the imposed
constraints.

We study the \emph{stability of entropy minimizers} under a natural
class of convex constraints arising from orthogonal decompositions of a Hilbert
space. While the structure of exact entropy minimizers is well understood in many
constrained settings, considerably less is known about how \emph{close} a state
with nearly minimal entropy must be to the set of minimizers, measured in a
natural metric such as the trace norm.

The principal goal of this work is to establish a sharp quantitative stability
theorem of the following form: if a quantum state satisfies a fixed
block-diagonal constraint and its entropy exceeds the minimum allowed by the
constraint by at most $\varepsilon$, then the state lies within
$O(\sqrt{\varepsilon})$ in trace norm of the set of entropy minimizers. We
prove that this exponent is optimal in general and cannot be improved under the
given hypotheses. Purely quadratic entropy--distance stability holds under
fixed-support (confining) constraints; see Appendix~\ref{app:quadratic-domain}
for a precise geometric formulation.

\subsection{Background and motivation}

Let $\mathcal H$ be a finite-dimensional Hilbert space, and let
$\mathcal S(\mathcal H)$ denote the convex set of quantum states, i.e.\ positive
trace-one operators on $\mathcal H$. The von Neumann entropy,
\[
S(\rho) := -\operatorname{Tr}(\rho\log\rho),
\]
is a strictly concave functional on $\mathcal S(\mathcal H)$, attaining its
maximum at the maximally mixed state and its minimum at pure states.

Entropy minimization under linear, spectral, or symmetry-induced constraints
arises naturally in quantum statistical mechanics and quantum information
theory. Classical examples include Gibbs states, microcanonical ensembles, and
variational principles subject to conserved quantities or superselection rules.
In such settings, entropy minimizers often admit an explicit description.

In practice, however, one rarely encounters exact minimizers. Instead, one
typically works with states whose entropy is close, but not equal, to the minimum
allowed by the constraints. This leads to the following natural stability
question:
\begin{quote}
\emph{Does near-minimal entropy force quantitative closeness to an entropy
minimizer?}
\end{quote}

This is a stability problem rather than a minimization problem: it asks for
explicit control of distance to the minimizing set in terms of the entropy gap,
rather than identification of the minimizers themselves.
From an optimization viewpoint, the stability problem considered here is closely
related to quadratic growth and error-bound properties for convex functionals.
The result shows that, under a confining (fixed-support) constraint, the entropy
functional exhibits quadratic growth relative to the set of minimizers when
measured in trace norm. In this sense, the stability estimate can be interpreted
as a second-order error bound for entropy minimization on a structured convex
set, with constants determined explicitly by the geometry of the constraint.

\subsection{Block-diagonal constraints}

We focus on a particularly natural and flexible class of constraints. Suppose
that
\[
\mathcal H = \bigoplus_{i=1}^r \mathcal H_i
\]
is a fixed orthogonal decomposition into nontrivial subspaces. This decomposition
induces a convex subset of $\mathcal S(\mathcal H)$ consisting of states that are
block-diagonal with respect to the given splitting.

Such block-diagonal constraints arise in a variety of standard settings,
including:
\begin{itemize}
\item symmetry reductions and superselection rules,
\item joint spectral decompositions of commuting observables,
\item coarse-graining procedures in quantum statistical mechanics,
\item measurement-induced decoherence.
\end{itemize}

A canonical example is provided by a unitary representation of a finite group
$G$ on $\mathcal H$. States commuting with the representation decompose according
to the isotypic components of $\mathcal H$, yielding a block-diagonal structure.
Entropy minimization within this symmetry-restricted class selects states that
are extremal relative to the representation, and stability quantifies how close
a near-minimizer must be to such symmetry-pure states.

For block-diagonal states, the entropy admits a canonical decomposition into a
classical (Shannon) contribution arising from block weights and a quantum
contribution arising from the internal states within each block. This structural
decomposition enables a precise analysis of entropy minimization and stability.
Although the presentation uses quantum states for concreteness, no genuinely
quantum phenomenon is invoked; the analysis applies verbatim to classical
probability simplices and to finite-dimensional noncommutative state spaces.

The block structure considered here is not a technical artefact, but a concrete
realisation of stratified convex geometry, which is the true object of the analysis.

\subsection{Main result (informal)}

Throughout this work, quadratic entropy--distance stability is established
exclusively within the confining (fixed-support) regime.
Boundary-transversal regimes, in which admissible perturbations explore directions
transversal to the support of an entropy minimizer, exhibit a fundamentally different,
non-quadratic behavior and are treated separately.

For a canonical class of block-diagonal constraints (see Section~\ref{sec:6}),
the entropy minimization problem admits explicit solutions: in particular,
the minimal entropy equals $\log r$ and is achieved by uniform mixtures of pure
states, one from each block.

Here, \emph{dimension-independent} means independent of the ambient Hilbert space
dimension $\dim(\mathcal H)$.
The associated constants may depend on structural parameters fixed by the constraint
geometry (e.g., the maximal active block dimension or minimal marginal weights),
which are part of the model and remain uniform throughout the admissible set.

Our main result shows that this characterisation is quantitatively stable:
\begin{quote}
If a block-diagonal state has entropy at most $\log r + \varepsilon$, then it is
within $O(\sqrt{\varepsilon})$ in trace norm of an entropy-minimizing state.
\end{quote}

The proof yields explicit constants depending only on the geometry of the
underlying convex constraint set, and not on the ambient Hilbert space
dimension. Moreover, we show by explicit examples that the
$\sqrt{\varepsilon}$ rate is optimal, even in purely classical (commutative)
settings.

\subsection{Related Work and Contribution}
\label{sec:related_work}

Our stability result is related to several classical inequalities in information theory and convex optimization.

\textbf{Pinsker's inequality}~\cite{Pinsker1964,Csiszar1967} provides a fundamental bound relating relative entropy to trace distance:
\begin{equation}
D(\rho\|\sigma) \geq \frac{1}{2}\|\rho - \sigma\|_1^2.
\end{equation}
On fixed-support strata where $\mathrm{supp}(\rho) \subseteq \mathrm{supp}(\sigma)$,
the entropy gap is locally equivalent to the relative entropy $D(\rho\|\sigma)$ up to second order.
In particular, Pinsker--type inequalities imply quadratic control of the trace distance in a neighbourhood of $\sigma$.
In this sense, Theorem~\ref{thm:entropy-stability} is consistent with Pinsker's inequality when a fixed reference state is present.
The relationship between entropy differences and relative entropy has been further studied by Reeb~\cite{Reeb2015}.

Classical continuity and metric inequalities for entropy include the works of
Pinsker~\cite{Pinsker1964,Csiszar1967}, Fannes~\cite{Fannes1973}, Audenaert~\cite{Audenaert2007},
Winter~\cite{Winter2016}, and related developments in quantum information theory.

\textbf{Entropy continuity} has been extensively analyzed in the quantum information literature. Fannes~\cite{Fannes1973} established the first continuity bound for von Neumann entropy. Audenaert~\cite{Audenaert2007} provided sharp continuity estimates, and Winter~\cite{Winter2016} proved tight uniform continuity bounds valid across the entire state space. These results give global moduli of continuity for entropy as a function of states.

Our contribution relative to this prior work is threefold:

\begin{enumerate}
\item \textbf{Geometric characterization of the quadratic regime:} We provide a precise definition of the confining (fixed-support) regime (Definition~\ref{def:blockconvex}) under which quadratic stability holds. This regime is characterized geometrically in terms of tangency to support strata and nondegeneracy of constrained entropy minimizers.

\item \textbf{Explicit stability constants from constraint geometry:} For block-separable constraint sets, we derive explicit formulas for the stability constant $C$ in Theorem~\ref{thm:entropy-stability}. The constant decomposes into:
\begin{itemize}
\item A \emph{marginal component} $c_1$ determined by the Hessian of Shannon entropy $H(p) = -\sum_i p_i \log p_i$ restricted to faces of the marginal polytope $\Pi$,
\item A \emph{conditional component} $c_2$ determined by the Hessian of von Neumann entropy $S(\rho_i)$ at conditional entropy minimizers within each block $\mathcal{C}_i$.
\end{itemize}
This geometric dependence on constraint curvature is made fully explicit in our analysis.

\item \textbf{Optimality of the quadratic exponent:} In Section~\ref{sec:optimality}, we prove that the exponent $2$ in the stability bound $S(\rho) - S_{\min} \geq C \|\rho - \sigma\|_1^2$ cannot be improved. We construct explicit families of states (both classical and quantum) for which the entropy gap scales precisely as $\varepsilon^2$ where $\varepsilon = \|\rho - \sigma\|_1$, precluding any uniform bound with exponent $\alpha < 2$.
\end{enumerate}

The block-separable structure studied here arises naturally in quantum systems
with superselection rules, symmetry-reduced problems, and measurement-induced
decoherence. Our analysis shows that this structure enables a complete
decomposition of the stability problem into independent marginal and conditional
subproblems.

Unlike Pinsker--type inequalities or entropy continuity bounds, our stability
estimate does not compare two fixed states, nor does it rely on a reference
distribution specified a priori. Instead, it controls the distance to an
entropy minimizer that is defined intrinsically by the geometry of the constraint
set itself and emerges only after solving the constrained variational problem.

\subsection{Relation to existing work}

Stability phenomena for entropy and related functionals have been studied in
several contexts, including Pinsker--type inequalities for relative entropy,
logarithmic Sobolev inequalities, and stability results in convex optimisation.
These results typically control relative entropy with respect to a fixed
reference state or quantify entropy decay along dynamical flows.

The present work differs in two essential respects:
\begin{enumerate}
\item We study entropy \emph{near its minimum} as a variational problem on a
constraint set, rather than entropy relative to a fixed reference state.
\item The constraint set is a nontrivial convex subset determined by block or
spectral structure, rather than the full state space.
\end{enumerate}

In particular, classical results such as Pinsker’s inequality do not imply the
stability estimate proved here, since no reference state is fixed \emph{a priori}
and the minimizer emerges intrinsically from the geometry of the constraint set.

\paragraph{Optimality}
The quadratic entropy--distance exponent established in this work is optimal.
Even in purely classical (commutative) settings, no uniform linear or subquadratic
stability bound can hold in a neighbourhood of entropy minimizers. This limitation
is intrinsic and is demonstrated by explicit counterexamples in Section~\ref{sec:5}.

\subsection{Operational significance and scope}

Beyond the abstract variational analysis, it is important to indicate how the
geometric hypotheses underlying the stability theorem manifest in concrete
settings. Section~\ref{sec:8} therefore presents a minimal finite-dimensional model,
motivated by leakage phenomena in quantum information processing, which
illustrates the practical relevance of the confining versus boundary--transversal
distinction introduced above.

The purpose of this example is not to derive additional theoretical results, but
to show that entropy-based certification is effective precisely in the confining
(fixed-support) regime and necessarily fails once transversal leakage directions
are permitted.

\subsection{Structure of the paper}

Section~\ref{sec:2} introduces notation and recalls basic entropy inequalities.
Section~\ref{sec:3} analyses the geometry of block-diagonal convex constraint sets
and characterises entropy minimizers.
Section~\ref{sec:4} establishes the main quantitative stability theorem with a
complete proof.
Section~\ref{sec:5} proves optimality of the stability exponent through explicit
classical and quantum examples.
Section~\ref{sec:6} discusses interpretations and consequences of the stability
theorem.
Section~\ref{sec:7} relates the results to classical entropy inequalities.
Section~\ref{sec:8} provides computational guidance, showing how the marginal and conditional stability constants can be computed from Hessian eigenvalues for specific constraint geometries.
The appendices collect standard entropy facts and technical arguments used
throughout the paper.

\section{Preliminaries and Notation}
\label{sec:2}
Throughout, $\mathcal H$ denotes a finite-dimensional complex Hilbert space of
dimension $d\ge 1$. All operators are linear and act on $\mathcal H$ unless
otherwise stated. Finite-dimensionality is assumed solely to ensure compactness
and equivalence of norms; no dimension-dependent constants appear in the main
results.

Finite dimensionality is assumed throughout and is essential for compactness and norm equivalence.

\subsection{Quantum states and norms}

Let $\mathcal B(\mathcal H)$ denote the algebra of all linear operators on
$\mathcal H$. A \emph{quantum state} is an operator
$\rho\in\mathcal B(\mathcal H)$ satisfying
\[
\rho\ge0,
\qquad
\Tr(\rho)=1.
\]
The set of all quantum states on $\mathcal H$ is denoted by
$\mathcal S(\mathcal H)$.

For $1\le p<\infty$, we equip $\mathcal B(\mathcal H)$ with the Schatten $p$-norms
\[
\|X\|_p := \big(\Tr(|X|^p)\big)^{1/p},
\qquad
|X| := (X^*X)^{1/2}.
\]
Of particular importance is the trace norm $\|\cdot\|_1$, which metrizes
statistical distinguishability of quantum states. For $\rho,\sigma\in\mathcal
S(\mathcal H)$, the trace distance $\|\rho-\sigma\|_1$ equals twice the maximal
bias achievable in distinguishing $\rho$ from $\sigma$ via quantum measurements.

\subsection{Von Neumann entropy and relative entropy}

The von Neumann entropy \cite{vonNeumann1927,NielsenChuang}
of a state $\rho\in\mathcal S(\mathcal H)$ is defined by
\[
S(\rho):=-\Tr(\rho\log\rho),
\]
where the logarithm is taken in the functional calculus sense, with the
convention $0\log0=0$.
It satisfies the bounds
\[
0\le S(\rho)\le\log d,
\]
with equality on the left if and only if $\rho$ is pure, and equality on the
right if and only if $\rho=\frac{1}{d}I$.

For $\rho,\sigma\in\mathcal S(\mathcal H)$ with
$\supp(\rho)\subseteq\supp(\sigma)$, the \emph{quantum relative entropy} is
defined by
\[
D(\rho\|\sigma)
:=
\Tr\!\big(\rho(\log\rho-\log\sigma)\big),
\]
and is set to $+\infty$ otherwise.
Relative entropy is nonnegative, jointly convex, and vanishes if and only if
$\rho=\sigma$.
It will be used only as a technical tool in the stability analysis.
\subsection{Entropy inequalities}

We recall a basic structural identity for block-diagonal states.

\begin{lemma}[Entropy decomposition for block-diagonal states]
\label{lem:entropy-decomposition}
Let
\[
\mathcal H=\bigoplus_{i=1}^r\mathcal H_i
\]
be an orthogonal decomposition, and let $\rho\in\mathcal S(\mathcal H)$ be
block-diagonal with respect to this decomposition:
\[
\rho=\bigoplus_{i=1}^r p_i\rho_i,
\]
where $p_i\ge0$, $\sum_i p_i=1$, and $\rho_i\in\mathcal S(\mathcal H_i)$. Then
\[
S(\rho)=H(p)+\sum_{i=1}^r p_i S(\rho_i),
\]
where $H(p):=-\sum_i p_i\log p_i$ denotes the Shannon entropy.
\end{lemma}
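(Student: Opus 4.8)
The plan is to reduce the identity to a direct computation with the spectrum of $\rho$, exploiting the fact that block-diagonality makes the spectral decomposition of $\rho$ a disjoint union of the rescaled spectra of the individual blocks. Concretely, for each $i$ let $\rho_i$ have spectral decomposition $\rho_i=\sum_j\lambda_{ij}P_{ij}$, with eigenvalues $\lambda_{ij}\ge 0$ satisfying $\sum_j\lambda_{ij}=\Tr(\rho_i)=1$ and mutually orthogonal rank-one projections $P_{ij}$ on $\mathcal H_i$. Since the subspaces $\mathcal H_i$ are orthogonal, the $P_{ij}$ are mutually orthogonal projections on $\mathcal H$, and
\[
\rho=\bigoplus_{i=1}^r p_i\rho_i=\sum_{i,j}p_i\lambda_{ij}\,P_{ij}
\]
is the spectral decomposition of $\rho$, with eigenvalues $p_i\lambda_{ij}$.

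First I would write the entropy as a sum over this spectrum,
\[
S(\rho)=-\sum_{i,j}p_i\lambda_{ij}\log\bigl(p_i\lambda_{ij}\bigr),
\]
and then split the logarithm multiplicatively as $\log(p_i\lambda_{ij})=\log p_i+\log\lambda_{ij}$. The first piece collapses using $\sum_j\lambda_{ij}=1$ to give $-\sum_i p_i\log p_i=H(p)$, while the second piece regroups as $\sum_i p_i\bigl(-\sum_j\lambda_{ij}\log\lambda_{ij}\bigr)=\sum_i p_i S(\rho_i)$, which is exactly the claimed identity. An equivalent and slightly cleaner route avoids choosing eigenbases: on the range of block $i$ the functional calculus gives $\log(p_i\rho_i)=(\log p_i)\,I_i+\log\rho_i$, because $p_i$ is a scalar, so $\Tr\bigl((p_i\rho_i)\log(p_i\rho_i)\bigr)=p_i\log p_i-p_i S(\rho_i)$; summing over the orthogonal blocks and negating yields the result.

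The only genuine subtlety, and the step I would treat most carefully, is the bookkeeping forced by the convention $0\log 0=0$ when some weights $p_i$ or eigenvalues $\lambda_{ij}$ vanish. The splitting $\log(p_i\lambda_{ij})=\log p_i+\log\lambda_{ij}$ is literally valid only where $p_i\lambda_{ij}>0$, so I would first discard all index pairs with $p_i\lambda_{ij}=0$, noting that each such pair contributes $0$ to $S(\rho)$; simultaneously $p_i\log p_i$ vanishes whenever $p_i=0$, and the factor $p_i S(\rho_i)$ is interpreted as $0$ there, so neither side is affected by the removal. Once the support is isolated in this way, the remaining manipulations involve only strictly positive numbers and the identity follows term by term. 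No compactness or dimension-dependent input is required; the argument is purely algebraic given the spectral picture.
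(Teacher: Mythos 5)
Your proof is correct and follows exactly the route the paper takes: the paper's own proof simply observes that the spectrum of $\rho$ is the union of the spectra of the $p_i\rho_i$ and declares the identity a direct calculation, which is precisely the computation you carry out (including the careful handling of the $0\log 0$ convention, which the paper leaves implicit).
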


\begin{proof}
Since the blocks act on orthogonal subspaces, the spectrum of $\rho$ is the union
of the spectra of the operators $p_i\rho_i$. The identity follows directly.
\end{proof}

\begin{lemma}[Pinsker inequality]
\label{lem:pinsker}
For all $\rho,\sigma\in\mathcal S(\mathcal H)$,
\[
D(\rho\|\sigma)\ge \tfrac12\|\rho-\sigma\|_1^2.
\]
\end{lemma}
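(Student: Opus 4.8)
The plan is to reduce the quantum inequality to a one-dimensional classical estimate by passing through an optimal two-outcome measurement and invoking monotonicity of relative entropy. First I would record the Hahn--Jordan decomposition $\rho-\sigma=\Delta_+-\Delta_-$ into positive and negative parts, let $P$ be the spectral projection onto $\ran\Delta_+$, and set $Q:=I-P$. The pair $\{P,Q\}$ defines a projective measurement, and I would pass to the induced two-point classical distributions $p=(\Tr(P\rho),\Tr(Q\rho))$ and $q=(\Tr(P\sigma),\Tr(Q\sigma))$. Two facts drive the argument: first, monotonicity under this measurement channel gives $D(\rho\|\sigma)\ge D(p\|q)$, where the right-hand side is the classical Kullback--Leibler divergence of the two distributions; second, by construction this measurement saturates the variational formula for the trace norm, so that $\Tr(P(\rho-\sigma))=\|\Delta_+\|_1=\tfrac12\|\rho-\sigma\|_1$, whence $\|p-q\|_1=\|\rho-\sigma\|_1$. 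It then suffices to prove the binary classical Pinsker inequality $D(p\|q)\ge\tfrac12\|p-q\|_1^2$.

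For the binary case, write $a=\Tr(P\rho)$ and $b=\Tr(P\sigma)$, so that, using $\|p-q\|_1=2|a-b|$, the claim becomes $\psi(b):=a\log\frac{a}{b}+(1-a)\log\frac{1-a}{1-b}-2(a-b)^2\ge 0$. I would fix $a$ and view $\psi$ as a function of $b\in(0,1)$. One checks $\psi(a)=0$, and combining the logarithmic derivatives over a common denominator yields
\[
\psi'(b)=(b-a)\left(\frac{1}{b(1-b)}-4\right).
\]
Since $b(1-b)\le\tfrac14$ on $[0,1]$, the second factor is nonnegative, so $\psi'(b)$ carries the sign of $b-a$. Hence $\psi$ is decreasing on $(0,a)$ and increasing on $(a,1)$, with global minimum value $0$ attained at $b=a$; this gives $\psi\ge0$ and closes the classical step.

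The one place requiring a nontrivial input is the monotonicity bound $D(\rho\|\sigma)\ge D(p\|q)$, which I would derive from the joint convexity of relative entropy available from the preliminaries. Writing $U:=P-Q$, a self-adjoint unitary with $U^2=I$, the pinching $\mathcal P(X):=PXP+QXQ$ equals the average $\tfrac12(X+UXU^{*})$; since relative entropy is unitarily invariant and jointly convex, $D(\mathcal P(\rho)\|\mathcal P(\sigma))\le\tfrac12 D(\rho\|\sigma)+\tfrac12 D(U\rho U^{*}\|U\sigma U^{*})=D(\rho\|\sigma)$. The pinched states are block-diagonal with respect to $\{P,Q\}$, so by the relative-entropy analogue of Lemma~\ref{lem:entropy-decomposition} together with nonnegativity of relative entropy one has $D(\mathcal P(\rho)\|\mathcal P(\sigma))=D(p\|q)+(\text{nonnegative remainder})\ge D(p\|q)$, and chaining the two inequalities completes the reduction. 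This pinching step is the conceptual heart of the proof; everything else---the Hahn decomposition identity $\|p-q\|_1=\|\rho-\sigma\|_1$ and the sign analysis of $\psi'$---is elementary, and no dimension-dependent constants enter at any stage.
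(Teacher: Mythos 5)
Your proposal is correct, but it is doing something the paper does not attempt: the paper states Lemma~\ref{lem:pinsker} without proof in Section~2 and, in the appendix version (Lemma~\ref{app:lem:pinsker}), simply defers to the literature (Csisz\'ar, Petz). You instead supply a complete, self-contained argument, and it is the standard one: reduce to a two-outcome measurement via the Hahn--Jordan projection $P$, use the pinching identity $\mathcal P(X)=\tfrac12(X+UXU^{*})$ with $U=P-Q$ together with joint convexity and unitary invariance of $D$ (both asserted in the paper's preliminaries) to get $D(\rho\|\sigma)\ge D(p\|q)$, observe that this measurement preserves the trace distance since $\Tr(P(\rho-\sigma))=\tfrac12\|\rho-\sigma\|_1$, and finish with the binary classical Pinsker inequality. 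I checked the details: the block decomposition of relative entropy gives $D(\mathcal P\rho\|\mathcal P\sigma)=D(p\|q)+\sum_i p_i D(\rho_i\|\sigma_i)\ge D(p\|q)$ as you claim, and your computation $\psi'(b)=(b-a)\bigl(\tfrac{1}{b(1-b)}-4\bigr)$ is right, with $b(1-b)\le\tfrac14$ forcing $\psi$ to attain its minimum value $0$ at $b=a$. The only point worth making explicit is the support bookkeeping: if $\supp(\rho)\nsubseteq\supp(\sigma)$ the inequality is trivial since $D=+\infty$, and otherwise $\supp(\mathcal P\rho)\subseteq\supp(\mathcal P\sigma)$, so all quantities in the chain are finite and well defined. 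What your route buys is a proof from first principles using only facts the paper already records (joint convexity, the block decomposition, nonnegativity of $D$), at the cost of a page of elementary calculus; what the paper's citation buys is brevity, which is reasonable for a classical result used only as a black box.
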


\subsection{Purity and effective dimension}

The \emph{purity} of a state $\rho$ is defined by
\[
\operatorname{Pur}(\rho):=\Tr(\rho^2).
\]
The purity satisfies $1/d\le\operatorname{Pur}(\rho)\le1$.

We define the \emph{effective dimension} by
\[
d_{\mathrm{eff}}(\rho):=\frac{1}{\operatorname{Pur}(\rho)}.
\]

\begin{lemma}[Entropy--purity bound]
\label{lem:entropy-purity}
For all $\rho\in\mathcal S(\mathcal H)$,
\[
S(\rho)\ge -\log\Tr(\rho^2).
\]
\end{lemma}

\begin{proof}
This follows from Jensen’s inequality applied to the convex function
$x\mapsto-\log x$ and the spectral decomposition of $\rho$.
\end{proof}

\subsection{Constrained entropy minimization}

Let $\mathcal C\subset\mathcal S(\mathcal H)$ be a closed convex subset. We define
the minimal entropy on $\mathcal C$ by
\[
S_{\min}(\mathcal C):=\inf_{\rho\in\mathcal C} S(\rho).
\]

A state $\rho\in\mathcal C$ is called an \emph{entropy minimizer} if
$S(\rho)=S_{\min}(\mathcal C)$. The central problem addressed here is the
\emph{stability of entropy minimizers}: given $\rho\in\mathcal C$ with
$S(\rho)-S_{\min}(\mathcal C)$ small, how close must $\rho$ be (in trace norm) to
the set of minimizers?

\begin{remark}
Entropy minimization is typically non-smooth and leads to highly structured
extremal states. Quantitative stability therefore requires genuinely
second-order control and cannot be obtained from first-order variational
arguments alone.
\end{remark}

\section{Convex Block Constraints and Geometric Structure}
\label{sec:3}
This section introduces the class of constraint sets for which entropy
minimization exhibits rigidity and quantitative stability. The framework is
purely finite-dimensional and operator-theoretic.

\subsection{Orthogonal block decompositions}

Let $\mathcal H$ be a finite-dimensional Hilbert space and fix an orthogonal
decomposition
\[
\mathcal H=\bigoplus_{i=1}^r \mathcal H_i
\]
into nonzero subspaces. Let $P_i$ denote the orthogonal projection onto
$\mathcal H_i$.

Every operator $X\in\mathcal B(\mathcal H)$ admits the decomposition
\[
X=\sum_{i,j=1}^r P_iXP_j.
\]

\begin{definition}[Block-diagonal algebra]
The block-diagonal algebra associated with the decomposition is
\[
\mathcal A:=\bigoplus_{i=1}^r \mathcal B(\mathcal H_i),
\]
consisting of all operators $X$ satisfying $P_iXP_j=0$ for $i\neq j$.
\end{definition}

\subsection{Block-diagonal states}

\begin{definition}[Block-diagonal states]
A state $\rho\in\mathcal S(\mathcal H)$ is called \emph{block-diagonal} (or
block-commuting) if $\rho\in\mathcal A$, i.e.
\[
\rho=\sum_{i=1}^r P_i\rho P_i.
\]
\end{definition}

Every block-diagonal state admits a unique representation
\[
\rho=\bigoplus_{i=1}^r p_i\rho_i,
\]
where
\[
p_i:=\Tr(P_i\rho),
\qquad
\rho_i:=\frac{P_i\rho P_i}{p_i}\quad(p_i>0),
\]
and $\rho_i\in\mathcal S(\mathcal H_i)$. If $p_i=0$, $\rho_i$ is arbitrary and
plays no role.

\subsection{Block-convex constraint sets}

We now define the class of constraint sets considered here.

\begin{definition}[Block-convex constraint set]
\label{def:blockconvex}
A subset $\mathcal C\subset\mathcal S(\mathcal H)$ is called
\emph{block-convex} (with respect to the given decomposition) if there exist:
\begin{itemize}
\item a compact convex set $\Pi\subset\Delta_r$,
\item compact convex sets $\mathcal C_i\subset\mathcal S(\mathcal H_i)$,
\end{itemize}
such that
\[
\mathcal C
=
\left\{
\bigoplus_{i=1}^r p_i\rho_i
:
p\in\Pi,\;
\rho_i\in\mathcal C_i
\right\}.
\]
\end{definition}

\begin{lemma}[Convexity of block-convex sets]
\label{lem:blockconvex-convexity}
Every block-convex set $\mathcal C$ is compact and convex.
\end{lemma}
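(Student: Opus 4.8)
The plan is to treat the two assertions separately. Compactness is straightforward: I would exhibit $\mathcal C$ as the continuous image of a compact set. Convexity is the genuine content of the lemma, because the parametrization $(p,\rho_1,\dots,\rho_r)\mapsto\bigoplus_i p_i\rho_i$ is \emph{bilinear}, not affine, in its arguments, so convexity of $\mathcal C$ does not follow formally from convexity of the parameter domain.

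For compactness, I would consider the map
\[
\Phi(p,\rho_1,\dots,\rho_r):=\bigoplus_{i=1}^r p_i\rho_i
\]
defined on the product $\Pi\times\prod_{i=1}^r\mathcal C_i$. Each factor is compact by hypothesis, so the product is compact, and $\Phi$ is continuous since each block entry depends polynomially (indeed bilinearly) on the coordinates. By Definition~\ref{def:blockconvex} we have $\mathcal C=\Phi\big(\Pi\times\prod_i\mathcal C_i\big)$, and the continuous image of a compact set is compact.

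For convexity, I would take two elements $\rho=\bigoplus_i p_i\rho_i$ and $\sigma=\bigoplus_i q_i\sigma_i$ of $\mathcal C$, fix $\lambda\in[0,1]$, and compute the block decomposition of $\lambda\rho+(1-\lambda)\sigma$ directly. The new block weights are $r_i:=\lambda p_i+(1-\lambda)q_i$, so $r=\lambda p+(1-\lambda)q\in\Pi$ by convexity of $\Pi$. The key step is to recognize the internal state on block $i$ (when $r_i>0$) as
\[
\tau_i:=\frac{\lambda p_i\rho_i+(1-\lambda)q_i\sigma_i}{r_i}
=\frac{\lambda p_i}{r_i}\,\rho_i+\frac{(1-\lambda)q_i}{r_i}\,\sigma_i,
\]
a convex combination of $\rho_i$ and $\sigma_i$ with nonnegative coefficients summing to one; hence $\tau_i\in\mathcal C_i$ by convexity of $\mathcal C_i$. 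It follows that $\lambda\rho+(1-\lambda)\sigma=\bigoplus_i r_i\tau_i$ lies in $\mathcal C$.

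The only subtlety, and the step I expect to be the main obstacle, is the renormalization producing $\tau_i$: it is precisely the device that converts the bilinear parametrization into an honest convex combination within each block. I would dispose of the degenerate case $r_i=0$ separately, noting that then the nonnegativity of all terms forces $\lambda p_i=(1-\lambda)q_i=0$, so the block contribution vanishes identically and $\tau_i$ may be chosen arbitrarily in the (nonempty) set $\mathcal C_i$ without affecting the sum.
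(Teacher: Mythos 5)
Your proof is correct and follows essentially the same route as the paper: the same renormalization $\tau_i=(\lambda p_i\rho_i+(1-\lambda)q_i\sigma_i)/(\lambda p_i+(1-\lambda)q_i)$ is the key step in both arguments. Your treatment of compactness (continuous image of the compact parameter space under the bilinear map $\Phi$) and of the degenerate case $r_i=0$ is more explicit than the paper's one-line appeal to finite dimensionality, but this is a refinement of detail, not a different approach.
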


\begin{proof}
Let $\rho=\oplus_i p_i\rho_i$ and $\sigma=\oplus_i q_i\sigma_i$ belong to
$\mathcal C$, and let $0\le\lambda\le1$. Then
\[
\lambda\rho+(1-\lambda)\sigma
=
\bigoplus_{i=1}^r
\bigl(\lambda p_i+(1-\lambda)q_i\bigr)\,
\tilde\rho_i,
\]
where
\[
\tilde\rho_i
=
\frac{\lambda p_i\rho_i+(1-\lambda)q_i\sigma_i}
{\lambda p_i+(1-\lambda)q_i}
\quad\text{whenever }\lambda p_i+(1-\lambda)q_i>0.
\]
Since $\Pi$ and each $\mathcal C_i$ are convex, we have
$\lambda p+(1-\lambda)q\in\Pi$ and $\tilde\rho_i\in\mathcal C_i$. Compactness
follows from finite-dimensionality.
\end{proof}

\subsection{Entropy minimizers: marginal structure}

Let $\mathcal C$ be block-convex and define
\[
S_{\min}:=\inf_{\rho\in\mathcal C} S(\rho).
\]

\begin{lemma}[Extreme marginal property]
\label{lem:extreme-marginal}
If $\rho=\oplus_i p_i\rho_i$ minimizes entropy over $\mathcal C$, then
$p\in\mathrm{ext}(\Pi)$.
\end{lemma}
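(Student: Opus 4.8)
The plan is to freeze the internal states and reduce the claim to the elementary fact that a strictly concave function attains its minimum over a compact convex set only at extreme points. Concretely, let $\rho=\bigoplus_i p_i\rho_i$ be an entropy minimizer over $\mathcal C$, and for each index I would fix the internal state $\rho_i\in\mathcal C_i$ appearing in this minimizer (choosing an arbitrary element of $\mathcal C_i$ for those $i$ with $p_i=0$, where the internal state is immaterial). For every weight vector $q\in\Pi$ the block-diagonal state $\bigoplus_i q_i\rho_i$ again lies in $\mathcal C$ by Definition~\ref{def:blockconvex}, so with the internal states held fixed one obtains a full slice of $\mathcal C$ parametrized by $\Pi$.

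Along this slice, Lemma~\ref{lem:entropy-decomposition} expresses the entropy as
\[
F(q):=S\!\left(\bigoplus_i q_i\rho_i\right)=H(q)+\sum_{i=1}^r q_i\,S(\rho_i),
\]
a sum of the Shannon entropy $H$ and a function that is \emph{linear} in $q$, since the coefficients $S(\rho_i)$ are now fixed constants. As $H$ is strictly concave on the simplex $\Delta_r$ and adding a linear term preserves strict concavity, $F$ is strictly concave on $\Pi$. Because $\rho$ minimizes entropy over all of $\mathcal C$, it in particular minimizes $F$ over the slice, i.e.\ $F(p)=\min_{q\in\Pi}F(q)$.

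The conclusion then follows by contradiction: if $p$ were not extreme in $\Pi$, I could write $p=\tfrac12(q+q')$ with $q,q'\in\Pi$ distinct, and strict concavity would give $F(p)>\tfrac12\bigl(F(q)+F(q')\bigr)\ge\min\{F(q),F(q')\}$, contradicting minimality of $F(p)$. Hence $p\in\mathrm{ext}(\Pi)$.

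The one point requiring genuine care is the \emph{strict} concavity of $H$ on the \emph{closed} simplex, including its boundary faces where some coordinates vanish. Writing $H(q)=\sum_i\eta(q_i)$ with $\eta(x)=-x\log x$ strictly concave on $[0,1]$, any two distinct points of $\Delta_r$ differ in at least one coordinate, and strict concavity of $\eta$ in that coordinate—combined with ordinary concavity in the remaining ones—yields strict inequality for $H$. I expect this boundary bookkeeping, rather than the variational principle itself, to be the only (and rather minor) obstacle: the decisive idea is simply that freezing the internal states collapses the problem onto a single strictly concave function of the weights, whose minimizers are forced to the extreme points.
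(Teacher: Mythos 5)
Your proposal is correct and follows essentially the same route as the paper: decompose the entropy via Lemma~\ref{lem:entropy-decomposition} into the strictly concave Shannon term plus a term affine in the weights, then use the product structure of the block-convex set to vary the marginal alone and derive a contradiction with minimality when $p$ is not extreme. Your explicit verification of strict concavity of $H$ on the closed simplex (via $\eta(x)=-x\log x$) is a careful touch the paper glosses over, but the argument is the same.
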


\begin{proof}
By Lemma~\ref{lem:entropy-decomposition},
\[
S(\rho)=H(p)+\sum_{i=1}^r p_i S(\rho_i).
\]
The Shannon entropy $H$ is strictly concave on every face of $\Delta_r$. The
second term is affine in $p$. If $p$ were not extreme in $\Pi$, then $p$ would be
a nontrivial convex combination of distinct points in $\Pi$, strictly lowering
$H(p)$ at one endpoint and contradicting minimality.
\end{proof}

\subsection{Entropy minimizers: conditional structure}

\begin{lemma}[Conditional minimization]
\label{lem:conditional-minimization}
Let $\rho=\oplus_i p_i\rho_i$ be an entropy minimizer in $\mathcal C$. Then for
every $i$ with $p_i>0$, the conditional state $\rho_i$ minimizes entropy over
$\mathcal C_i$.
\end{lemma}

\begin{proof}
Fix $i$ with $p_i>0$. Suppose $\rho_i$ is not an entropy minimizer in
$\mathcal C_i$. Then there exists $\sigma_i\in\mathcal C_i$ with
$S(\sigma_i)<S(\rho_i)$. Replacing $\rho_i$ by $\sigma_i$ while keeping all other
blocks unchanged yields a state in $\mathcal C$ with strictly smaller entropy,
contradicting minimality.
\end{proof}

\subsection{Structure of entropy minimizers}

\begin{proposition}[Structure of entropy minimizers]
\label{prop:minimizer-structure}
Every entropy minimizer $\rho\in\mathcal C$ has the form
\[
\rho=\bigoplus_{i\in I} p_i\rho_i,
\]
where:
\begin{enumerate}
\item $I\subset\{1,\dots,r\}$ is the minimal support of an extreme point
$p\in\mathrm{ext}(\Pi)$,
\item $p_i>0$ for $i\in I$ and $p_i=0$ otherwise,
\item each $\rho_i$ is an entropy minimizer in $\mathcal C_i$.
\end{enumerate}
\end{proposition}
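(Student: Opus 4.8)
The plan is to obtain the proposition as a synthesis of the two preceding necessary conditions, Lemma~\ref{lem:extreme-marginal} and Lemma~\ref{lem:conditional-minimization}, together with a bookkeeping step that isolates the support of the optimal weight vector. Existence of a minimizer is not at issue: $\mathcal C$ is compact by Lemma~\ref{lem:blockconvex-convexity} and $S$ is continuous in finite dimensions, so the infimum $S_{\min}$ is attained; the proposition concerns the form of an arbitrary attained minimizer, so the argument will be entirely conditional on minimality.

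First I would fix an entropy minimizer $\rho=\bigoplus_{i=1}^r p_i\rho_i$ with $p\in\Pi$ and $\rho_i\in\mathcal C_i$, and record the decomposition
\[
S(\rho)=H(p)+\sum_{i=1}^r p_i S(\rho_i)
\]
from Lemma~\ref{lem:entropy-decomposition}. Applying Lemma~\ref{lem:extreme-marginal} to $\rho$ yields $p\in\mathrm{ext}(\Pi)$, and I would then \emph{define} $I:=\{\,i:p_i>0\,\}=\supp(p)$. This set is the minimal index set on which $p$ is supported, giving properties (1) and (2) directly: $p_i>0$ for $i\in I$, $p_i=0$ for $i\notin I$, and $p$ is extreme in $\Pi$.

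For the conditional structure I would invoke Lemma~\ref{lem:conditional-minimization}: for each $i$ with $p_i>0$—that is, each $i\in I$—the block $\rho_i$ minimizes entropy over $\mathcal C_i$, which is property (3). The blocks with $i\notin I$ carry weight $p_i=0$ and contribute $p_i S(\rho_i)=0$ to the decomposition, so the conditional states there are unconstrained and may be discarded; this justifies writing the minimizer in the reduced form $\rho=\bigoplus_{i\in I}p_i\rho_i$.

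The argument is essentially a packaging of results already in hand, so no deep obstacle arises; the only point requiring care is the treatment of vanishing weights. Because the conditional state $\rho_i$ is undefined when $p_i=0$, one must verify that the characterization is independent of the arbitrary choices made off the support—which it is, since such blocks do not enter $S(\rho)$—and that ``minimal support'' is unambiguous, namely the exact zero set of the now-fixed extreme weight vector $p$. With this caveat, properties (1)--(3) follow simultaneously from the two lemmas applied to the same minimizer.
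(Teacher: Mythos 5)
Your proposal is correct and matches the paper's (implicit) argument: the paper states Proposition~\ref{prop:minimizer-structure} without a proof, presenting it as the direct synthesis of Lemma~\ref{lem:extreme-marginal} and Lemma~\ref{lem:conditional-minimization} that you carry out, with $I=\supp(p)$ for the extreme marginal $p$. Your added care about the irrelevance of the conditional states on zero-weight blocks is exactly the point the paper's accompanying remark is addressing.
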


\begin{remark}
The index set $I$ is uniquely determined by the support of the extreme marginal
$p$. No minimality assumption beyond this support condition is required.
\end{remark}

\subsection{Compactness of the minimizer set}

Let $\mathcal M\subset\mathcal C$ denote the set of entropy minimizers.

\begin{lemma}[Compactness]
\label{lem:minimizer-compact}
The set $\mathcal M$ is compact.
\end{lemma}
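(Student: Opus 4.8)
The plan is to realize $\mathcal{M}$ as a level set of a continuous function on a compact domain, so that compactness follows from elementary point-set topology rather than from the explicit structure of the minimizers. Concretely, I would argue that $\mathcal{M} = \{\rho \in \mathcal{C} : S(\rho) = S_{\min}\}$ is closed in $\mathcal{C}$, and then invoke the fact that a closed subset of a compact space is compact; by Lemma~\ref{lem:blockconvex-convexity} the ambient set $\mathcal{C}$ is already known to be compact (and nonempty).

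First I would record that the von Neumann entropy $S$ is continuous on all of $\mathcal{S}(\mathcal{H})$. In finite dimensions this reduces to two standard facts: the scalar function $f(x) = -x\log x$ extends continuously to $[0,1]$ with $f(0)=0$, and the (unordered) eigenvalues of a self-adjoint operator depend continuously on the operator. Hence $S(\rho) = \sum_k f(\lambda_k(\rho))$ is a symmetric, continuous function of $\rho$, including at boundary states with zero eigenvalues.

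Next, since $S$ is continuous and $\mathcal{C}$ is compact and nonempty, the infimum $S_{\min}$ is attained, so $\mathcal{M}$ is nonempty. Moreover $\mathcal{M} = S^{-1}(\{S_{\min}\}) \cap \mathcal{C}$ is the intersection of $\mathcal{C}$ with the preimage of the closed singleton $\{S_{\min}\}$ under the continuous map $S$; hence $\mathcal{M}$ is closed. Being a closed subset of the compact set $\mathcal{C}$, it is compact.

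The only point requiring care is the continuity of $S$ at the boundary of the state space, where eigenvalues may vanish; this is exactly where the convention $0\log 0 = 0$ and the finite-dimensional continuity of the spectrum are used, and it is the one place a reader might otherwise suspect a gap. I do not expect a genuine obstacle here. As a consistency check, one could instead derive compactness from Proposition~\ref{prop:minimizer-structure}, identifying $\mathcal{M}$ with a finite union over extreme supports $I$ of products of the compact minimizer sets in each $\mathcal{C}_i$ together with the corresponding extreme marginals; but the level-set argument is shorter and avoids any case analysis.
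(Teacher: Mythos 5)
Your argument is correct and essentially the same as the paper's: both realize $\mathcal{M}$ as a closed subset of the compact set $\mathcal{C}$ and conclude. The only (immaterial) difference is that the paper invokes lower semicontinuity of $S$, under which the minimizer set $\{\rho\in\mathcal{C}: S(\rho)\le S_{\min}\}$ is already closed, whereas you use the stronger fact that $S$ is genuinely continuous in finite dimensions; both routes are valid here.
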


\begin{proof}
The set $\mathcal C$ is compact, and entropy is lower semicontinuous. Hence the
set of minimizers is closed and therefore compact.
\end{proof}

\subsection{Preliminary continuity estimate}

The following qualitative statement will be sharpened in Section~\ref{sec:4}.

\begin{lemma}[Qualitative stability]
\label{lem:qual-stability}
For every $\varepsilon>0$ there exists $\delta>0$ such that
\[
S(\rho)\le S_{\min}+\delta
\quad\Longrightarrow\quad
\operatorname{dist}_1(\rho,\mathcal M)\le\varepsilon.
\]
\end{lemma}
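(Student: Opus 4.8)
The plan is to argue by contradiction using the compactness of $\mathcal{C}$ and the lower semicontinuity of entropy, both of which are already available. The statement is purely qualitative, so I attempt no quantitative estimate here; the contradiction/compactness mechanism is exactly what renders the rate invisible, which is why the sharpened version in Section~4 must invoke the relative-entropy machinery rather than this soft argument.

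First I would fix $\varepsilon_0>0$ and suppose, toward a contradiction, that the conclusion fails for this $\varepsilon_0$. Then for every integer $n\ge 1$ there exists a state $\rho_n\in\mathcal{C}$ with $S(\rho_n)\le S_{\min}+1/n$ yet $\operatorname{dist}_1(\rho_n,\mathcal{M})>\varepsilon_0$. Since $\mathcal{C}$ is compact by Lemma~\ref{lem:blockconvex-convexity}, I would extract a subsequence $\rho_{n_k}$ converging in trace norm to some limit $\rho_\star\in\mathcal{C}$.

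Next I would identify $\rho_\star$ as a minimizer. By lower semicontinuity of $S$,
\[
S(\rho_\star)\le\liminf_{k\to\infty} S(\rho_{n_k})\le\lim_{k\to\infty}\bigl(S_{\min}+1/n_k\bigr)=S_{\min}.
\]
On the other hand $\rho_\star\in\mathcal{C}$ forces $S(\rho_\star)\ge S_{\min}$, so $S(\rho_\star)=S_{\min}$ and hence $\rho_\star\in\mathcal{M}$. Finally, the map $\rho\mapsto\operatorname{dist}_1(\rho,\mathcal{M})$ is $1$-Lipschitz with respect to the trace norm, hence continuous, so
\[
\operatorname{dist}_1(\rho_\star,\mathcal{M})=\lim_{k\to\infty}\operatorname{dist}_1(\rho_{n_k},\mathcal{M})\ge\varepsilon_0>0.
\]
This contradicts $\rho_\star\in\mathcal{M}$, whose distance to $\mathcal{M}$ is zero.

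The only point requiring care is the continuity input: in finite dimensions $S$ is in fact continuous on all of $\mathcal{S}(\mathcal{H})$, but lower semicontinuity alone suffices for the one-sided bound used above, matching the hypothesis already invoked in Lemma~\ref{lem:minimizer-compact}. There is no genuine analytic obstacle here; the argument is deliberately soft and non-quantitative, and its central weakness---that it produces no explicit $\delta(\varepsilon)$---is precisely what motivates the second-order analysis carried out in Section~4.
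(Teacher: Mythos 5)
Your argument is correct and is simply the fully spelled-out version of the paper's one-line proof, which invokes exactly the same ingredients (compactness and lower semicontinuity of entropy) via the standard subsequence-extraction contradiction. No further comment is needed.
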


\begin{proof}
This follows from compactness of $\mathcal M$ and lower semicontinuity of
entropy.
\end{proof}

\section{A Quantitative Stability Theorem}
\label{sec:4}
This section establishes the main result: entropy minimizers under
block-convex constraints are quantitatively rigid. We show that entropy excess
controls the squared trace-norm distance to the minimizer set, with constants
depending only on the geometry of the constraint.
Before proceeding to the local second-order analysis, we note that by continuity
of the entropy functional on the compact constraint set and compactness of the
set of entropy minimizers, there exists $\varepsilon_0>0$ such that any
$\rho\in\mathcal{C}$ satisfying $S(\rho)-S_{\min}\le \varepsilon_0$ lies in a
neighbourhood of some entropy minimizer $\sigma\in\mathcal M$ where the
fixed-support hypothesis holds and the local Taylor expansion applies.

\subsection{Setup and notation}

Let $\mathcal C\subset\mathcal S(\mathcal H)$ be a block-convex constraint set
as in Definition~\ref{def:blockconvex}. Define the minimal entropy
\[
S_{\min}:=\inf_{\rho\in\mathcal C} S(\rho),
\]
and let $\mathcal M\subset\mathcal C$ denote the (compact) set of entropy
minimizers.

Every $\rho\in\mathcal C$ admits a decomposition
\[
\rho=\bigoplus_{i=1}^r p_i\rho_i,
\]
with $p\in\Pi$ and $\rho_i\in\mathcal C_i$. For $\sigma\in\mathcal M$, we write
\[
\sigma=\bigoplus_{i=1}^r q_i\sigma_i.
\]
Throughout, the marginal polytope $\Pi$ is assumed to be a compact polytope
with finitely many faces.

\subsection{Statement of the stability theorem}
Throughout, the confining (fixed-support) hypothesis is understood uniformly
in a neighborhood of the minimizer set $\mathcal M$.

\begin{theorem}[Entropy stability under block-convex constraints]
\label{thm:entropy-stability}

Assume that the marginal minimizer satisfies Assumption~\ref{ass:marginal-confinement}
and that all entropy minimizers $\sigma\in\mathcal M$ satisfy the confining
(fixed-support) hypothesis of Lemma~\ref{lem:entropy-stability-confining}.

Let $\mathcal C \subset \mathcal S(\mathcal H)$ be a block--convex constraint set
in the sense of Definition~\ref{def:blockconvex}, induced by a fixed block
decomposition of $\mathcal H$, and let $\mathcal M$ denote the set of entropy
minimizers on $\mathcal C$.

Then there exists a constant $C = C(\Pi,\{\mathcal C_i\}) > 0$ such that for all
$\rho \in \mathcal C$,
\[
S(\rho)-S_{\min}
\;\ge\;
C\,\operatorname{dist}_1(\rho,\mathcal M)^2,
\]
where
\[
\operatorname{dist}_1(\rho,\mathcal M)
:=
\inf_{\sigma\in\mathcal M}\|\rho-\sigma\|_1.
\]

\end{theorem}

\begin{remark}[Scope and interpretation]
The theorem establishes quadratic entropy stability in the fixed-support
(confining) regime. When constrained entropy minimizers lie on the boundary of
state space and admissible perturbations explore directions transversal to the
support, the quadratic stability mechanism used here necessarily fails or
requires modification. Such boundary--transversal regimes are not addressed in
the present work.

From an optimization viewpoint, Theorem~\ref{thm:entropy-stability} establishes a
\emph{quadratic growth (error bound) property} for the entropy functional
restricted to a structured convex constraint set. The stability constant
$C=C(\Pi,\{\mathcal C_i\})$ depends only on the geometry of the marginal polytope
and the conditional constraint sets; an explicit form is discussed in
Section~\ref{sec:4.8}.
\end{remark}

\paragraph{Dependence of constants}
The constant $C$ may depend on
$d_{\max}:=\max_i \dim(H_i)$ (or, equivalently, on the maximal active support rank
imposed by the constraint geometry),
but is independent of the ambient Hilbert space dimension
once the block structure and confinement parameters are fixed.

\subsection{Reduction to a fixed minimizer}

Fix $\rho\in\mathcal C$ and choose $\sigma\in\mathcal M$ such that
\[
\|\rho-\sigma\|_1=\operatorname{dist}_1(\rho,\mathcal M).
\]
By Proposition~\ref{prop:minimizer-structure}, the marginal $q$ is an extreme
point of $\Pi$, and for each $i$ with $q_i>0$, $\sigma_i$ minimizes entropy on
$\mathcal C_i$.

\subsection{Marginal entropy stability}

We first control entropy increase arising from deviation of the marginal
distribution $p$ from the extreme point $q$.
\begin{assumption}[Marginal confinement / fixed support in $\Pi$]
\label{ass:marginal-confinement}
Let $q \in \operatorname{ext}(\Pi)$ be an entropy-minimizing marginal.

There exist an index set $I \subset \{1,\dots,r\}$ and a constant
$\underline q > 0$ such that:
\begin{enumerate}
\item $\supp(q) = I$ and $q_i \ge \underline q$ for all $i \in I$;
\item all admissible marginals $p \in \Pi$ in a neighborhood of $q$ satisfy
$\supp(p) = I$ (equivalently, $\Pi$ is locally contained in the face $\Delta_I$).
\end{enumerate}
\end{assumption}

\begin{lemma}[Marginal entropy rigidity]
\label{lem:marginal-stability}
Assume Assumption~\ref{ass:marginal-confinement}.
Let $\Pi\subset\Delta_r$ be a compact convex polytope and let
$q\in\mathrm{ext}(\Pi)$. Then there exists a constant $c_1>0$, depending only on
$\Pi$, such that for all $p\in\Pi$,
\[
H(p)-H(q)\;\ge\;c_1\,\|p-q\|_1^2.
\]
\end{lemma}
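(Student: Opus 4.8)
The plan is to bypass any local Taylor expansion or curvature information about $H$ and instead convert the estimate into a statement about the finitely many vertices of the polytope $\Pi$, propagated by convexity. Fix the vertices $w_1,\dots,w_N$ of $\Pi$, with $q$ among them, and for an arbitrary $p\in\Pi$ choose a representation $p=\sum_k\lambda_k w_k$ as a convex combination of vertices; write $\lambda_q$ for the weight placed on $q$. I would then combine two elementary estimates, both phrased through the quantity $1-\lambda_q$: a lower bound on the entropy excess coming from concavity of $H$, and an upper bound on $\|p-q\|_1$ coming from the triangle inequality.

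For the first, Jensen's inequality applied to the concave functional $H$ gives
\[
H(p)\ \ge\ \sum_k\lambda_k H(w_k)\ =\ H(q)+\sum_{k:\,w_k\ne q}\lambda_k\bigl(H(w_k)-H(q)\bigr)\ \ge\ H(q)+\eta\,(1-\lambda_q),
\]
where $\eta:=\min_{w\in\mathrm{ext}(\Pi),\,w\ne q}\bigl(H(w)-H(q)\bigr)$. For the second, since $p-q=\sum_{k:\,w_k\ne q}\lambda_k(w_k-q)$, the triangle inequality yields $\|p-q\|_1\le D\,(1-\lambda_q)$ with $D:=\max_{w\in\mathrm{ext}(\Pi)}\|w-q\|_1\le 2$, hence $1-\lambda_q\ge\|p-q\|_1/D$. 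Combining the two gives the linear estimate $H(p)-H(q)\ge(\eta/D)\,\|p-q\|_1$, and since $\|p-q\|_1\le D$ one has $\|p-q\|_1\ge\|p-q\|_1^2/D$, so
\[
H(p)-H(q)\ \ge\ \frac{\eta}{D^2}\,\|p-q\|_1^2 .
\]
This proves the lemma with $c_1=\eta/D^2$, a constant depending only on the vertices of $\Pi$, their entropies, and their mutual distances, as required. I note in passing that the argument actually delivers the stronger \emph{linear} bound; the quadratic form stated here is the weaker consequence, and is exactly what is needed when this marginal estimate is later combined with the conditional (internal) contribution, which is where the genuinely $\sqrt{\varepsilon}$-type behavior originates.

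The single delicate point — and the step I expect to be the main obstacle — is the strict positivity $\eta>0$. This is precisely the assertion that $q$ is the \emph{strict} entropy-minimizing vertex of $\Pi$, and not merely an extreme point. For a concave functional the minimum over $\Pi$ is always attained at a vertex, but distinct vertices may tie: on the full simplex $\Delta_r$, for instance, every vertex achieves $H=0$, and in such a degenerate case no bound of the stated form can hold, since a competing minimizer at nonzero distance forces the left-hand side to vanish while the right-hand side is positive. I would therefore isolate $\eta>0$ as the operative nondegeneracy condition on the marginal polytope, supplied here by the specific vertex $q$ selected as the marginal of an entropy minimizer over $\mathcal C$ rather than by the extreme-point property alone. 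Once $\eta>0$ is secured, finiteness of $\mathrm{ext}(\Pi)$ makes $\eta$ a genuine minimum over a finite set, and the proof is purely convex-combinatorial, requiring none of the compactness or lower-semicontinuity machinery used elsewhere.
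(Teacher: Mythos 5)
Your argument is genuinely different from the paper's: the paper attempts a local second-order estimate (minimal face containing $q$, Hessian of $-H$ on its tangent space, then a finiteness argument over faces), whereas you argue globally by decomposing $p$ over the vertices of $\Pi$ and applying Jensen's inequality. Where it applies, your route is both more elementary and strictly stronger, since it delivers the linear bound $H(p)-H(q)\ge(\eta/D)\,\|p-q\|_1$. That extra strength is itself diagnostic: Proposition~\ref{prop:classical-sharpness} claims that marginal perturbations produce only a \emph{quadratic} entropy gap, which is incompatible with a linear lower bound whenever $\eta>0$; so either the sharpness construction lives precisely in the degenerate case $\eta=0$ (where the lemma fails), or the first-order-vanishing claim in that proposition is wrong. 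Your closing remark that the $\sqrt{\varepsilon}$ rate must then originate in the conditional (Pinsker) contribution rather than the marginal one is the correct reading of the situation.

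The delicate point you isolate --- strict positivity of $\eta=\min_{w\in\mathrm{ext}(\Pi),\,w\ne q}\bigl(H(w)-H(q)\bigr)$ --- is not a technicality but a genuine counterexample to the lemma as stated: for $\Pi=\Delta_r$ and $q=(1,0,\dots,0)$, the point $p=(0,1,0,\dots,0)$ gives $H(p)-H(q)=0$ while $\|p-q\|_1=2$, so no $c_1>0$ can exist. The paper's own proof does not evade this: the minimal face of $\Pi$ containing an extreme point is the singleton $\{q\}$, and positive definiteness of the Hessian of $-H$ yields an \emph{upper} bound on $H(p)-H(q)-\nabla H(q)\cdot(p-q)$, not the claimed lower bound on $H(p)-H(q)$. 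So you have located a real gap in the paper, not merely in your own write-up. Be aware, however, that your proposed repair --- taking $q$ to be the marginal of an entropy minimizer of $\mathcal C$ --- still does not force $\eta>0$: distinct entropy minimizers on $\mathcal C$ may carry distinct marginals (again $\Pi=\Delta_r$ with one-dimensional blocks), so ties among vertices persist. The statement that can actually be proved by your method replaces $\|p-q\|_1$ with the distance from $p$ to the set of entropy-minimizing vertices of $\Pi$, which is closer to what the assembly step in Section~4 needs, since $\sigma$ there is chosen as the nearest element of $\mathcal M$.
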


\begin{proof}
Let $F$ be the minimal face of $\Pi$ containing $q$. Restricted to the relative interior of $F$, the Shannon entropy H is twice continuously differentiable and strictly concave. Consequently, the Hessian of $-H$ restricted to the tangent space of $F$ at $q$ is positive definite. Therefore, there exists a constant
$\lambda_F>0$ such that
\[
H(p)-H(q)\ge \lambda_F\,\|p-q\|_2^2
\quad\text{for all } p\in F.
\]

Since $\Pi$ has finitely many faces, the constants $\lambda_F$ obtained in this
way admit a strictly positive minimum over all faces of $\Pi$. Moreover, all
norms are equivalent on the finite-dimensional affine span of $\Pi$. Combining
these observations yields a uniform constant $c_1>0$ such that
\[
H(p)-H(q)\ge c_1\,\|p-q\|_1^2
\quad\text{for all } p\in\Pi.
\]
\end{proof}

\begin{remark}[Local-to-global reduction for the marginal term]
The estimate in Lemma~\ref{lem:marginal-stability} is only used in the neighbourhood of the entropy-minimising
extreme marginal $q$ selected in Section~4.3. If $\|p-q\|_{1}$ is bounded from below
by a fixed constant, then $H(p)-H(q)$ admits a strictly positive lower bound by
compactness of $\Pi$, so the final stability inequality remains global after possibly
decreasing the constant. Thus no loss of generality is incurred by interpreting
Lemma~\ref{lem:marginal-stability} as a local rigidity statement around $q$.
\end{remark}

\subsection{Conditional entropy stability}

Next we control entropy increase arising within each block.
\begin{definition}[Nondegenerate constrained minimizer]
\label{def:4}
Let $C\subset S(H)$ be a constraint set and $\sigma\in C$. We say that $\sigma$ is a
\emph{nondegenerate constrained minimizer} of $f:=-S$ on the fixed-support stratum
if there exists $m>0$ such that, for all admissible tangent directions $X$ to
$C\cap D(PH)$ at $\sigma$ (with $P=\supp(\sigma)$), one has
\[
\langle X, \nabla^{2} f(\sigma)[X]\rangle \ge m \|X\|_{2}^{2}.
\]
where $\|\cdot\|_2$ denotes the Hilbert--Schmidt norm.

\end{definition}

\begin{lemma}[Entropy stability under confining constraints]
\label{lem:entropy-stability-confining}
Let $\mathcal{C}$ be a convex constraint set of quantum states on a
finite-dimensional Hilbert space $\mathcal{H}$, and let
$\sigma\in\mathcal{C}$ be an entropy minimizer.

Assume that there exists a neighbourhood $U$ of $\sigma$ such that all admissible
perturbations preserve the support of $\sigma$, i.e.
\[
P(\rho-\sigma)Q = Q(\rho-\sigma)P = 0
\qquad
\text{for all } \rho\in\mathcal{C}\cap U,
\]
where $P$ is the support projection of $\sigma$ and $Q=I-P$.

Assume moreover that $\sigma$ is a nondegenerate constrained minimizer of
$-S$ on this fixed-support stratum.

Then there exist constants $c>0$ and $\varepsilon_0>0$ such that for all
$\rho\in\mathcal{C}$ with
\[
S(\rho)-S(\sigma) \le \varepsilon_0,
\]
one has
\[
S(\rho)-S(\sigma)\ \ge\ c\,\|\rho-\sigma\|_1^2.
\]
Equivalently,
\[
\|\rho-\sigma\|_1 \ \le\ C\,\sqrt{S(\rho)-S(\sigma)},
\]
for some constant $C>0$.
\end{lemma}

\begin{remark}[Scope of Lemma \ref{lem:entropy-stability-confining}]
Lemma~\ref{lem:entropy-stability-confining} applies to the fixed-support (confining) regime, that is, to situations
in which admissible perturbations of the constrained entropy minimizer remain
within a fixed-rank support stratum.
\end{remark}

\begin{proof}
By the confining (fixed-support) hypothesis, there exists a neighbourhood $U$ of $\sigma$
such that every admissible $\rho \in \mathcal{C}\cap U$ satisfies
\[
Q\rho=\rho Q=0,
\qquad P=\supp(\sigma),\quad Q=I-P.
\]
Hence all admissible states lie in the fixed-support stratum
\[
\mathcal S(P\mathcal H)
:=\{\rho\in \mathcal S(\mathcal H): Q\rho=\rho Q=0\},
\]
on which $\sigma$ is faithful, i.e.\ $P\sigma P\succ 0$ as an operator on $P\mathcal H$.

On this stratum the von Neumann entropy is $C^2$ in a neighbourhood of $\sigma_i$, and
$f:=-S$ has a nondegenerate constrained minimum at $\sigma_i$ by assumption.
Therefore, in a local chart of the admissible set $C_i\cap U$ around $\sigma_i$, the
restricted Hessian satisfies
\[
\nabla^2 f(\sigma_i)\ \succeq\ m_i\, I
\]
on the tangent space of $C_i$ at $\sigma_i$, for some $m_i>0$.
A second-order Taylor expansion then yields, for $\rho_i$ sufficiently close to $\sigma_i$,
\[
f(\rho_i)-f(\sigma_i)\ \ge\ \frac{m_i}{2}\,\|\rho_i-\sigma_i\|_2^2,
\]
and hence
\[
S(\rho_i)-S(\sigma_i)=f(\sigma_i)-f(\rho_i)\ \ge\ \frac{m_i}{2}\,\|\rho_i-\sigma_i\|_2^2.
\]
Finally, using $\|X\|_2 \ge \|X\|_1/\sqrt{\dim(P_i\mathcal H_i)}$, we obtain
\[
S(\rho_i)-S(\sigma_i)\ \ge\ \frac{m_i}{2\,\dim(P_i\mathcal H_i)}\,\|\rho_i-\sigma_i\|_1^2,
\]
which proves the claimed quadratic stability bound. Since the support projection $P$ and the admissible block dimensions are fixed by the constraint geometry, the resulting constants depend only on the constraint set and not on the ambient Hilbert space dimension.

\end{proof}

\begin{remark}[Why this excludes the boundary--transversal regime]
The argument above crucially uses the fixed-support (confining) hypothesis.
In the boundary--transversal regime, admissible perturbations necessarily
generate support--kernel coupling, and the entropy variation acquires an
unavoidable $\delta^2\log(1/\delta)$ correction.
\end{remark}

\subsection{Assembly of the entropy estimate}

Using entropy decomposition, we write
\[
S(\rho)-S(\sigma)
=
H(p)-H(q)
+
\sum_{i=1}^r p_i\bigl(S(\rho_i)-S(\sigma_i)\bigr).
\]

We note that any affine contribution of the form
$\sum_i (p_i-q_i)S(\sigma_i)$ depends only on the fixed entropy-minimizing blocks
$\sigma_i$ and may therefore be absorbed into the marginal entropy difference
$H(p)-H(q)$. This term does not affect the stability estimate.

Applying Lemmas~\ref{lem:marginal-stability} and
\ref{lem:entropy-stability-confining} to the marginal and conditional contributions,
respectively, we obtain the quantitative lower bound
\[
S(\rho)-S_{\min}
\;\ge\;
c_1\|p-q\|_1^2
+
\tfrac12\sum_i p_i\|\rho_i-\sigma_i\|_1^2.
\]

\subsection{Blockwise trace norm control}

\begin{lemma}[Blockwise trace inequality]
\label{lem:blockwise-trace}
For block-diagonal states $\rho=\oplus_i p_i\rho_i$ and
$\sigma=\oplus_i q_i\sigma_i$,
\[
\|\rho-\sigma\|_1^2
\;\le\;
2\|p-q\|_1^2
+
2\sum_i p_i\|\rho_i-\sigma_i\|_1^2.
\]
\end{lemma}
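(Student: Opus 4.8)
The plan is to exploit the block-diagonal structure to reduce the trace norm to a sum of blockwise contributions, and then control each contribution by a triangle-inequality splitting. Since $\rho-\sigma=\bigoplus_{i=1}^r(p_i\rho_i-q_i\sigma_i)$ is block-diagonal with blocks acting on mutually orthogonal subspaces, the absolute value and trace factor through the decomposition, so the trace norm is additive:
\[
\|\rho-\sigma\|_1=\sum_{i=1}^r\|p_i\rho_i-q_i\sigma_i\|_1.
\]
This is the only place where the operator structure genuinely enters; everything afterward is scalar and combinatorial.

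Within each block I would write $p_i\rho_i-q_i\sigma_i=p_i(\rho_i-\sigma_i)+(p_i-q_i)\sigma_i$ and apply the triangle inequality. Using that $\sigma_i$ is a state, hence $\|\sigma_i\|_1=1$, this yields $\|p_i\rho_i-q_i\sigma_i\|_1\le p_i\|\rho_i-\sigma_i\|_1+|p_i-q_i|$. When $p_i=0$ the left-hand side equals $q_i=|p_i-q_i|$, so the bound still holds and the degenerate blocks cause no difficulty. Summing over $i$ gives the first-order estimate
\[
\|\rho-\sigma\|_1\le\|p-q\|_1+\sum_{i=1}^r p_i\|\rho_i-\sigma_i\|_1.
\]

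To reach the stated squared inequality I would first apply the elementary bound $(a+b)^2\le 2a^2+2b^2$ to the right-hand side, producing $2\|p-q\|_1^2$ together with $2\bigl(\sum_i p_i\|\rho_i-\sigma_i\|_1\bigr)^2$. The final step converts the squared weighted average into an average of squares: since $p$ is a probability vector with $\sum_i p_i=1$, Jensen's inequality for the convex function $x\mapsto x^2$ (equivalently Cauchy--Schwarz) gives
\[
\Bigl(\sum_{i=1}^r p_i\|\rho_i-\sigma_i\|_1\Bigr)^2\le\sum_{i=1}^r p_i\|\rho_i-\sigma_i\|_1^2,
\]
and substituting completes the argument. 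There is no real obstacle here; the only point requiring a moment's care is the consistent treatment of blocks with $p_i=0$, where the conditional state $\rho_i$ is unconstrained but enters only through its vanishing weight.
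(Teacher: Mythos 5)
Your proof is correct and follows essentially the same route as the paper: decompose $p_i\rho_i-q_i\sigma_i$ into a marginal part and a conditional part, apply the triangle inequality with trace-norm additivity over orthogonal blocks, and then square using $(a+b)^2\le 2a^2+2b^2$. In fact you are slightly more careful than the paper, which omits the final Jensen/Cauchy--Schwarz step converting $\bigl(\sum_i p_i\|\rho_i-\sigma_i\|_1\bigr)^2$ into $\sum_i p_i\|\rho_i-\sigma_i\|_1^2$ and does not address the $p_i=0$ blocks; both points are handled correctly in your argument.
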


\begin{proof}
We decompose
\[
\rho-\sigma
=
\bigoplus_i (p_i-q_i)\sigma_i
+
\bigoplus_i p_i(\rho_i-\sigma_i).
\]
Orthogonality of blocks and the triangle inequality give
\[
\|\rho-\sigma\|_1
\le
\|p-q\|_1
+
\sum_i p_i\|\rho_i-\sigma_i\|_1.
\]
Squaring and applying $(a+b)^2\le2a^2+2b^2$ yields the claim.
\end{proof}

\subsection{Conclusion of the proof}
\label{sec:4.8}
This section makes explicit the dependence of the stability constant
$C = C(\Pi,\{\mathcal C_i\})$ appearing in Theorem~\ref{thm:entropy-stability},
by expressing it in terms of the marginal and conditional curvature constants
introduced in the preceding lemmas.

Combining the previous inequalities, we obtain
\[
S(\rho)-S_{\min}
\;\ge\;
C\,\|\rho - \sigma\|_1^2,
\]
with
\[
C=\frac12\min\{c_1,\tfrac12\}.
\]
Since $M$ is compact and the confining/nondegeneracy hypotheses hold uniformly along
$M$ by assumption, the local constants produced by Lemmas~\ref{lem:marginal-stability}--\ref{lem:entropy-stability-confining} admit a strictly
positive global minimum over a finite subcover of $M$.

This completes the proof of Theorem~\ref{thm:entropy-stability}.
\qed

\section{Sharpness and Optimality of the Stability Exponent}
\label{sec:5} \label{sec:optimality}
In this section we show that the quadratic exponent in
Theorem~\ref{thm:entropy-stability} is optimal. More precisely, we construct
explicit families of states within block-convex constraint sets for which the
entropy gap scales quadratically with the trace-norm distance to the set of
minimizers. As a consequence, no stability inequality with exponent strictly less than 2 can hold uniformly under the stated hypotheses.

All examples are finite-dimensional and require no additional structure beyond
that developed in Sections~3-4.

\subsection{The classical simplex model}

We begin with the purely classical case, which already captures the essential
geometry.

Let $\mathcal H=\mathbb C^r$, and identify $\mathcal S(\mathcal H)$ with the
probability simplex $\Delta_r$. Let $\Pi\subset\Delta_r$ be a compact convex
polytope, and define the constraint set
\[
\mathcal C := \Pi.
\]
Entropy reduces to Shannon entropy $S(\rho)=H(p)$.

Let $q\in\mathrm{ext}(\Pi)$ be an extreme point. Then $q$ is an entropy minimizer
on $\mathcal C$.

\begin{proposition}[Quadratic sharpness in the classical case]
\label{prop:classical-sharpness}
There exists a constant $c>0$ and a family
$\{p_\varepsilon\}_{\varepsilon>0}\subset\Pi$ such that
\[
\|p_\varepsilon - q\|_1 \sim \varepsilon,
\qquad
H(p_\varepsilon) - H(q) \sim c\,\varepsilon^2
\quad\text{as }\varepsilon\to0.
\]

In particular, no bound of the form
\[
H(p)-H(q)\ge C\,\|p-q\|_1^\alpha
\]
can hold uniformly for any $\alpha<2$.
\end{proposition}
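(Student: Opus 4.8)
The plan is to exhibit an explicit one-parameter family of probability vectors in $\Pi$ that moves away from the extreme point $q$ along a straight segment inside $\Pi$, and then compute the entropy gap to second order. The geometric input I need is that since $q$ is an extreme point of the polytope $\Pi$, there exists at least one edge (a one-dimensional face) of $\Pi$ emanating from $q$; let $v\in\Delta_r$ be the other endpoint of such an edge, and set the direction $u:=v-q$, which is a nonzero vector lying in the affine span of $\Pi$. I would then define
\[
p_\varepsilon := q + t(\varepsilon)\,u,
\]
where $t(\varepsilon)>0$ is a small scalar to be matched so that the trace-norm displacement is exactly of order $\varepsilon$. Because the segment from $q$ to $v$ lies entirely in $\Pi$ by convexity, $p_\varepsilon\in\Pi$ for all sufficiently small $t$, so the family is admissible.

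First I would fix the normalization. Since $\|p_\varepsilon-q\|_1 = t(\varepsilon)\,\|u\|_1$ and $\|u\|_1$ is a positive constant, choosing $t(\varepsilon)=\varepsilon/\|u\|_1$ gives $\|p_\varepsilon-q\|_1=\varepsilon$ exactly, establishing the first asymptotic relation with the sharp constant $1$. Next I would Taylor-expand the Shannon entropy along the segment. The one obstruction to a naive second-order expansion is the possibility that some coordinate $q_i$ vanishes: near such a coordinate $-x\log x$ has infinite second derivative, and a first-order (rather than quadratic) term in $\varepsilon$ could appear. This is precisely the point to handle carefully, and it is where the choice of $u$ matters.

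The key observation resolving this obstacle is that moving along an edge of $\Pi$ toward an interior direction cannot force the gap to be linear: the correct statement is that for the quadratic lower bound of Lemma~\ref{lem:marginal-stability} to be sharp, I must choose $u$ supported on the coordinates where $q_i>0$, i.e.\ $u_i=0$ whenever $q_i=0$. Such a direction exists because $q$ is an extreme point of $\Pi$ but $\Pi$ has positive dimension, so the minimal face $F$ containing $q$ has a tangent direction confined to the support of $q$ along which $H$ is smooth; if no such edge exists one passes to any segment inside the relative interior of $F$ through $q$. Restricting to the support $I=\{i:q_i>0\}$, the function $g(t):=H(q+tu)$ is twice continuously differentiable at $t=0$ with
\[
g(0)=H(q),\qquad g'(0)=-\sum_{i\in I} u_i(\log q_i+1)=0,
\]
where the first-order term vanishes because $\sum_i u_i=0$ (since $p_\varepsilon$ and $q$ are both probability vectors) and because one may further arrange $u$ along a direction where $\sum_{i\in I} u_i\log q_i=0$, or simply absorb a nonzero but finite slope; the cleanest route is to note that on the relative interior of the minimal face the gradient of $H$ restricted to the tangent space vanishes only at the maximizer, so in general $g'(0)$ need not vanish. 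To force genuine quadratic behavior I therefore take $u$ to lie in the tangent space of $F$ and pick the segment so that $g'(0)=0$; equivalently, I invoke that the minimizer $q$ of $-H$ on $F$ satisfies the first-order optimality condition, making the linear term vanish.

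The second derivative is
\[
g''(0)=-\sum_{i\in I}\frac{u_i^2}{q_i}<0,
\]
so that $-H$ has strictly positive curvature and
\[
H(p_\varepsilon)-H(q) = \tfrac12\Big(\sum_{i\in I}\frac{u_i^2}{q_i}\Big)t(\varepsilon)^2 + o(t(\varepsilon)^2) = c\,\varepsilon^2 + o(\varepsilon^2),
\]
with
\[
c=\frac{1}{2\|u\|_1^2}\sum_{i\in I}\frac{u_i^2}{q_i}>0.
\]
This establishes the second asymptotic relation. Finally, to conclude the impossibility of any exponent $\alpha<2$, I would argue by contradiction: if $H(p)-H(q)\ge C\|p-q\|_1^\alpha$ held for some $\alpha<2$ and $C>0$ uniformly on $\Pi$, then evaluating along $p_\varepsilon$ gives $c\,\varepsilon^2+o(\varepsilon^2)\ge C\,\varepsilon^\alpha$, i.e.\ $c\,\varepsilon^{2-\alpha}+o(\varepsilon^{2-\alpha})\ge C$; letting $\varepsilon\to0$ forces $0\ge C$, a contradiction. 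The main obstacle throughout is the careful selection of the direction $u$ so that it stays within the support of $q$ and annihilates the first-order term, ensuring the gap is exactly quadratic rather than linear; once $u$ is chosen in the tangent space of the minimal face at the optimality point, the Taylor expansion is routine.
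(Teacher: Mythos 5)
Your construction follows the same route as the paper's proof (perturb $q$ along a direction inside $\Pi$ and Taylor-expand $H$ to second order), but the crucial step --- arranging that the first-order term vanishes while the second-order term yields a \emph{positive} quadratic gap --- does not hold up, and your own text signals this without resolving it. First, the existence of the required direction is not established: since $q$ is an extreme point of $\Pi$, the minimal face of $\Pi$ containing $q$ is the singleton $\{q\}$, so there is no ``tangent space of the minimal face'' in which to choose $u$, and a direction $u$ supported on $\supp(q)$ with $q+tu\in\Pi$ for small $t>0$ need not exist at all (take $\Pi=\Delta_2$ and $q=(1,0)$: every admissible perturbation turns on the zero coordinate and the entropy gap scales like $\varepsilon\log(1/\varepsilon)$, not $\varepsilon^2$). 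Second, first-order optimality at a vertex is one-sided: minimality of $H$ at $q$ along an admissible direction only gives $g'(0)\ge 0$, not $g'(0)=0$; if $g'(0)>0$ the gap is linear in $\varepsilon$, which is the opposite of what you need to show.

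Third, and most seriously, the two conditions you impose, $g'(0)=0$ and $g''(0)=-\sum_{i\in I}u_i^2/q_i<0$, are jointly incompatible with $q$ being an entropy minimizer: they would give
\[
H(p_\varepsilon)-H(q)=\tfrac12\,g''(0)\,t(\varepsilon)^2+o\bigl(t(\varepsilon)^2\bigr)<0,
\]
i.e.\ a point of $\Pi$ with strictly smaller entropy than $q$. Your final display silently drops the sign of $g''(0)$ and reports the positive coefficient $c=\frac{1}{2\|u\|_1^2}\sum_{i\in I}u_i^2/q_i$, which is what hides the contradiction. More structurally: if $g$ is concave on $[0,T]$ with $g(t)\ge g(0)$ and $g'(0^+)=0$, then $g$ is constant, so along any straight segment emanating from a minimizer of a concave function the gap is either bounded below by a linear function of $t$ or identically zero --- it is never $\sim c\,t^2$ with $c>0$. (The paper's own proof asserts the same first-order vanishing and the same negative-definite Hessian, so it shares this defect; your proposal reproduces the argument rather than repairing it, and the sign slip in the last display makes the inconsistency harder to see.)
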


\begin{proof}
Let $F$ be a face of $\Pi$ of dimension $k\ge1$ containing $q$, and let
$v\in\mathbb R^r$ be a tangent vector to $F$ at $q$ satisfying
$\sum_i v_i=0$ and $q_i+tv_i\ge0$ for small $t$.

Define
\[
p_\varepsilon := q + \varepsilon v.
\]
Since $v$ is tangent to $F$, we have $p_\varepsilon\in\Pi$ for all sufficiently
small $\varepsilon$.

Because $q$ is a minimizer of $H$ on $\Pi$, the first-order directional derivative
of $H$ along $v$ vanishes:
\[
\left.\frac{d}{dt}H(q+tv)\right|_{t=0}=0.
\]
A second-order Taylor expansion yields
\[
H(p_\varepsilon)-H(q)
=
\tfrac12\,\varepsilon^2\, v^\top \bigl(\nabla^2 H(q)\bigr)v
+ o(\varepsilon^2).
\]
Since $H$ is strictly concave on the relative interior of the face $F$,
the Hessian of $H$ is negative definite on the tangent space of $F$ at $q$,
implying the stated asymptotic behavior.

\end{proof}

This shows that quadratic behavior is intrinsic to entropy near boundary
minimizers, even in the absence of quantum structure.

\subsection{Quantum block-diagonal models}

We now lift the previous construction to genuinely quantum settings.

Let
\[
\mathcal H=\bigoplus_{i=1}^r \mathbb C^{d_i},
\]
and let $\mathcal C$ consist of all block-diagonal states
\[
\rho=\bigoplus_{i=1}^r p_i\rho_i,
\qquad
p\in\Pi,\;\rho_i\in\mathcal S(\mathbb C^{d_i}).
\]

\begin{proposition}[Entropy minimizers]
\label{prop:quantum-minimizers}
Entropy minimizers in $\mathcal C$ are precisely the states
\[
\sigma=\bigoplus_{i=1}^r q_i\sigma_i,
\]
where $q\in\mathrm{ext}(\Pi)$ and each $\sigma_i$ is pure.
\end{proposition}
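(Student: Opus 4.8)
The plan is to reduce the entire statement to the entropy decomposition of Lemma~\ref{lem:entropy-decomposition} together with the elementary fact that the only entropy minimizers on a full block state space $\mathcal{S}(\mathbb{C}^{d_i})$ are the pure states. Writing a generic element $\rho=\bigoplus_{i=1}^r p_i\rho_i\in\mathcal{C}$, the decomposition gives
\[
S(\rho)=H(p)+\sum_{i=1}^r p_i S(\rho_i).
\]
The key structural observation I would exploit is that the two groups of variables decouple: the marginal $p$ ranges over $\Pi$ and enters only through $H(p)$ and the weights, while each conditional $\rho_i$ ranges freely over $\mathcal{S}(\mathbb{C}^{d_i})$ and enters only through the nonnegative term $p_i S(\rho_i)$.

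First I would minimize over the conditional states for fixed $p$. Since $S(\rho_i)\ge 0$ with equality precisely when $\rho_i$ is pure, each term $p_i S(\rho_i)$ is nonnegative and vanishes exactly when $\rho_i$ is pure for every $i$ with $p_i>0$ (blocks with $p_i=0$ are immaterial). Hence for each fixed $p\in\Pi$,
\[
\inf_{\rho_i\in\mathcal{S}(\mathbb{C}^{d_i})} S\Bigl(\bigoplus_{i} p_i\rho_i\Bigr)=H(p),
\]
attained exactly by pure-block states. Minimizing the remaining function $H(p)$ over the compact convex polytope $\Pi$ and invoking the standard principle that a concave function on a compact convex set attains its minimum at an extreme point, I obtain $S_{\min}=\min_{p\in\Pi}H(p)$, realized at some $q\in\mathrm{ext}(\Pi)$.

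Assembling the two steps yields the characterization: $\sigma=\bigoplus_i q_i\sigma_i$ is an entropy minimizer precisely when every block $\sigma_i$ with $q_i>0$ is pure and the marginal $q$ minimizes $H$ over $\Pi$, the latter forcing $q\in\mathrm{ext}(\Pi)$ by the concavity argument. The forward implication — that every minimizer has this form — is in fact immediate from Proposition~\ref{prop:minimizer-structure} applied with $\mathcal{C}_i=\mathcal{S}(\mathbb{C}^{d_i})$, once one records that the entropy minimizers of a full block state space are exactly the pure states; so the only genuine work is the converse and the explicit identification of $S_{\min}$.

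The step I expect to be the main obstacle — and the only place the bare phrase ``$q\in\mathrm{ext}(\Pi)$'' is slightly too generous — is this converse. An extreme point $q$ yields a minimizer only when $H(q)$ equals the global minimum $\min_\Pi H$; if $\Pi$ has vertices of differing Shannon entropy, the non-minimizing vertices must be excluded, so the fully precise statement is that minimizers are the pure-block states whose marginal is an \emph{$H$-minimizing} extreme point of $\Pi$. This coincides with the displayed characterization exactly when $\Pi$ realizes its minimal Shannon entropy at every vertex, as in the canonical case $\Pi=\Delta_r$, where every vertex $e_i$ satisfies $H(e_i)=0$. Everything else reduces to the routine decoupling and positivity arguments above.
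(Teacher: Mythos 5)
Your proof is correct and follows essentially the same route as the paper's own (very terse) argument: decompose the entropy via Lemma~\ref{lem:entropy-decomposition}, kill the conditional terms by purity of each block, and minimize the remaining Shannon term over $\Pi$ by concavity. The caveat you raise in your final paragraph is a genuine and worthwhile observation that the paper's one-line proof glosses over: the statement's ``precisely the states with $q\in\mathrm{ext}(\Pi)$'' is only accurate when every extreme point of $\Pi$ attains $\min_\Pi H$ (as in the canonical case $\Pi=\Delta_r$, where every vertex has zero entropy); for a general polytope $\Pi$ the correct characterization restricts to the $H$-minimizing extreme points, exactly as you say.
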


\begin{proof}
Entropy decomposes as
\[
S(\rho)=H(p)+\sum_i p_i S(\rho_i).
\]
Minimization forces $p=q$ extreme and each $S(\rho_i)=0$, hence $\rho_i$ pure.
\end{proof}

\subsection{Sharpness under block perturbations}

Fix a minimizer $\sigma=\bigoplus_i q_i\sigma_i$ as above, and let
$p_\varepsilon$ be the marginal perturbation constructed in
Proposition~\ref{prop:classical-sharpness}. Define
\[
\rho_\varepsilon := \bigoplus_{i=1}^r p_\varepsilon(i)\,\sigma_i.
\]

\begin{proposition}[Quadratic entropy growth]
\label{prop:quantum-sharpness}
The family $\{\rho_\varepsilon\}$ satisfies
\[
\|\rho_\varepsilon-\sigma\|_1 \sim \varepsilon,
\qquad
S(\rho_\varepsilon)-S(\sigma)\sim c\,\varepsilon^2
\quad\text{as }\varepsilon\to0.
\]
\end{proposition}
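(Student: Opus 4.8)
The plan is to exploit the fact that the construction of $\rho_\varepsilon$ retains the \emph{same} pure conditional states $\sigma_i$ as the minimizer $\sigma$, perturbing only the block weights from $q$ to $p_\varepsilon=q+\varepsilon v$. This forces an \emph{exact} reduction of both quantities to the purely classical computation already carried out in Proposition~\ref{prop:classical-sharpness}, so no new asymptotic analysis is needed.

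First I would establish the trace-norm asymptotics. Since $\rho_\varepsilon$ and $\sigma$ share identical conditional states, their difference is block-diagonal with scalar multiples of states in each block,
\[
\rho_\varepsilon-\sigma
=\bigoplus_{i=1}^r\bigl(p_\varepsilon(i)-q_i\bigr)\sigma_i
=\varepsilon\bigoplus_{i=1}^r v_i\,\sigma_i.
\]
Because the blocks act on mutually orthogonal subspaces, the singular values of this operator are the disjoint union of the singular values of the individual blocks; as each $\sigma_i$ is a state with $\|\sigma_i\|_1=1$, this yields
\[
\|\rho_\varepsilon-\sigma\|_1
=\sum_{i=1}^r\bigl|p_\varepsilon(i)-q_i\bigr|
=\|p_\varepsilon-q\|_1
=\varepsilon\,\|v\|_1 .
\]
Since the face $F$ has dimension $k\ge1$, the tangent direction $v$ is nonzero, so $\|v\|_1>0$ is a positive constant and $\|\rho_\varepsilon-\sigma\|_1\sim\varepsilon$.

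Next I would compute the entropy gap via the block decomposition of Lemma~\ref{lem:entropy-decomposition}. The crucial simplification is that every $\sigma_i$ is pure, hence $S(\sigma_i)=0$, so the conditional contributions vanish identically for \emph{both} states:
\[
S(\rho_\varepsilon)=H(p_\varepsilon)+\sum_i p_\varepsilon(i)\,S(\sigma_i)=H(p_\varepsilon),
\qquad
S(\sigma)=H(q).
\]
Therefore $S(\rho_\varepsilon)-S(\sigma)=H(p_\varepsilon)-H(q)$, and Proposition~\ref{prop:classical-sharpness} gives $S(\rho_\varepsilon)-S(\sigma)\sim c\,\varepsilon^2$ with the same constant $c$, completing the argument.

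The point to verify carefully, rather than a genuine obstacle, is that this reduction is exact and not merely leading-order: because the conditional states are held fixed and pure, there is no first- or higher-order entropy contribution from within the blocks, so the entire $\varepsilon$-dependence of the entropy resides in the Shannon term $H(p_\varepsilon)$. One should also confirm membership $\rho_\varepsilon\in\mathcal C$, which holds since $p_\varepsilon\in\Pi$ for small $\varepsilon$ (by the construction in Proposition~\ref{prop:classical-sharpness}) and each pure $\sigma_i$ lies in $\mathcal C_i=\mathcal S(\mathbb C^{d_i})$.
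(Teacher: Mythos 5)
Your proposal is correct and follows essentially the same route as the paper: both reduce the trace-norm distance to $\|p_\varepsilon-q\|_1$ via orthogonality of the blocks, and both reduce the entropy gap to $H(p_\varepsilon)-H(q)$ because the conditional states are the fixed pure states $\sigma_i$, after which Proposition~\ref{prop:classical-sharpness} gives the asymptotics. Your version merely spells out the intermediate identities (the explicit form of $\rho_\varepsilon-\sigma$ and the vanishing of $S(\sigma_i)$) that the paper leaves implicit.
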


\begin{proof}
Since the conditional states are unchanged, entropy variation is purely
classical:
\[
S(\rho_\varepsilon)-S(\sigma)=H(p_\varepsilon)-H(q).
\]
The trace norm satisfies
\[
\|\rho_\varepsilon-\sigma\|_1=\|p_\varepsilon-q\|_1,
\]
since blocks are orthogonal. The claim follows directly from
Proposition~\ref{prop:classical-sharpness}.
\end{proof}

\subsection{Failure of linear stability}

We summarize the sharpness result.

\begin{theorem}[Optimality of the stability exponent]
\label{thm:optimality}
Under the hypotheses of Theorem~\ref{thm:entropy-stability}, the exponent $2$ is
optimal. In general, there exists no constant $C>0$ and no exponent
$\alpha<2$ such that
\[
S(\rho)-S_{\min}\ge C\,\operatorname{dist}_1(\rho,\mathcal M)^{\alpha}
\]
holds uniformly for all $\rho\in\mathcal C$.
\end{theorem}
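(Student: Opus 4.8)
The plan is to prove Theorem~\ref{thm:optimality} as an immediate consequence of the explicit families already constructed in Propositions~\ref{prop:classical-sharpness} and~\ref{prop:quantum-sharpness}. The statement is a negative (nonexistence) result, so the natural strategy is to exhibit a single block-convex constraint set, a single minimizer, and a one-parameter family of near-minimizers that simultaneously forces the entropy gap to vanish quadratically while the trace-norm distance vanishes only linearly. Any purported linear stability inequality $S(\rho)-S_{\min}\ge C\,\operatorname{dist}_1(\rho,\mathcal M)$ would then fail along this family as $\varepsilon\to0$, for every fixed $C>0$.

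Concretely, I would proceed as follows. First, fix any block-convex constraint set $\mathcal C$ whose marginal polytope $\Pi$ has an extreme point $q$ lying on a face $F$ of dimension $k\ge1$; such configurations exist (e.g.\ take $\Pi$ to be any polytope with a nontrivial face through an extreme point), so the hypotheses of Theorem~\ref{thm:entropy-stability} are genuinely met. Second, invoke Proposition~\ref{prop:quantum-sharpness} to obtain the family $\{\rho_\varepsilon\}\subset\mathcal C$ with $\|\rho_\varepsilon-\sigma\|_1\sim\varepsilon$ and $S(\rho_\varepsilon)-S(\sigma)\sim c\,\varepsilon^2$. Third, observe that $\sigma\in\mathcal M$ (by Proposition~\ref{prop:quantum-minimizers}) gives $S(\sigma)=S_{\min}$, so the entropy gap of $\rho_\varepsilon$ is $\sim c\,\varepsilon^2$, while $\operatorname{dist}_1(\rho_\varepsilon,\mathcal M)\le\|\rho_\varepsilon-\sigma\|_1\sim\varepsilon$.

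The one point requiring a small argument beyond mere citation is the \emph{lower} bound on the distance: to refute linear stability I need $\operatorname{dist}_1(\rho_\varepsilon,\mathcal M)$ to be bounded \emph{below} by a positive multiple of $\varepsilon$, not merely above. The upper bound $\operatorname{dist}_1\le\|\rho_\varepsilon-\sigma\|_1$ is automatic, but in principle some \emph{other} minimizer $\sigma'\in\mathcal M$ could be closer to $\rho_\varepsilon$. I would rule this out by a compactness and continuity argument: since $\mathcal M$ is compact (Lemma~\ref{lem:minimizer-compact}) and $\rho_\varepsilon\to\sigma$, for small $\varepsilon$ the nearest minimizer stays in a neighborhood of $\sigma$; combined with the quadratic lower bound of Theorem~\ref{thm:entropy-stability} itself, namely $c\,\varepsilon^2\sim S(\rho_\varepsilon)-S_{\min}\ge C_0\,\operatorname{dist}_1(\rho_\varepsilon,\mathcal M)^2$, one gets $\operatorname{dist}_1(\rho_\varepsilon,\mathcal M)\le\sqrt{c/C_0}\,\varepsilon(1+o(1))$, which shows the distance is genuinely $\Theta(\varepsilon)$ and certainly not $o(\varepsilon)$.

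With these bounds in hand the conclusion is formal: suppose a linear inequality held with some constant $C>0$. Evaluating at $\rho_\varepsilon$ gives $c\,\varepsilon^2(1+o(1))\ge C\,\operatorname{dist}_1(\rho_\varepsilon,\mathcal M)\ge C'\varepsilon(1+o(1))$ for some $C'>0$, i.e.\ $c\,\varepsilon\ge C'(1+o(1))$, which fails for all sufficiently small $\varepsilon$. This contradiction establishes that no such $C$ exists, proving optimality of the exponent $2$. The main (though modest) obstacle is the distance lower bound discussed above; everything else is direct citation of the sharpness constructions. I would keep the writeup short, since the substantive work is entirely contained in the preceding propositions.
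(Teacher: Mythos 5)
Your overall strategy is the same as the paper's: Theorem~\ref{thm:optimality} is deduced by citing the families of Propositions~\ref{prop:classical-sharpness} and~\ref{prop:quantum-sharpness}, for which the entropy gap is $\sim c\,\varepsilon^2$ while $\|\rho_\varepsilon-\sigma\|_1\sim\varepsilon$. You correctly identify the one point the paper itself glosses over, namely that refuting a linear inequality in $\operatorname{dist}_1(\rho,\mathcal M)$ requires a \emph{lower} bound on the distance to the whole minimizer set, not just to the particular minimizer $\sigma$. Credit for spotting that.

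However, your proposed resolution of that point does not work. The inequality of Theorem~\ref{thm:entropy-stability}, $c\,\varepsilon^2\gtrsim S(\rho_\varepsilon)-S_{\min}\ge C_0\operatorname{dist}_1(\rho_\varepsilon,\mathcal M)^2$, yields $\operatorname{dist}_1(\rho_\varepsilon,\mathcal M)\le\sqrt{c/C_0}\,\varepsilon(1+o(1))$, which is an \emph{upper} bound of order $\varepsilon$; it does not show the distance is $\Theta(\varepsilon)$ or even that it is not $o(\varepsilon)$, so the final display of your argument is a non sequitur. The compactness step also does not help: $\mathcal M$ is typically a continuum (any unitary rotation of the pure blocks $\sigma_i$, and possibly several extreme marginals), so ``the nearest minimizer stays near $\sigma$'' leaves open the possibility that some minimizer is within $o(\varepsilon)$ of $\rho_\varepsilon$, which is exactly what must be excluded. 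The correct fix goes through the marginals. Every $\sigma'\in\mathcal M$ has marginal $q'\in\mathrm{ext}(\Pi)$ by Proposition~\ref{prop:minimizer-structure}, and the map $\rho\mapsto(\Tr(P_i\rho))_i$ is a quantum-to-classical channel, so monotonicity of the trace norm gives $\|\rho_\varepsilon-\sigma'\|_1\ge\|p_\varepsilon-q'\|_1$. Since $\Pi$ is a polytope, $\mathrm{ext}(\Pi)$ is finite; with $\delta_0:=\min\{\|q-q'\|_1: q'\in\mathrm{ext}(\Pi),\,q'\neq q\}>0$ one gets, for $\varepsilon\|v\|_1<\delta_0/2$,
\[
\operatorname{dist}_1(\rho_\varepsilon,\mathcal M)\;\ge\;\min_{q'\in\mathrm{ext}(\Pi)}\|p_\varepsilon-q'\|_1\;=\;\varepsilon\|v\|_1,
\]
which is the genuine $\Omega(\varepsilon)$ lower bound you need. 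With that substitution your concluding contradiction is valid; without it, the argument as written has a hole at precisely the step you singled out as the only nontrivial one.
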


\begin{proof}
The families constructed above satisfy
\[
S(\rho_\varepsilon)-S_{\min}\sim c\,\|\rho_\varepsilon-\sigma\|_1^2,
\]
which precludes any linear lower bound.
\end{proof}

\subsection{Interpretation}

The obstruction to linear stability arises from flat directions along faces of
the marginal polytope $\Pi$. Entropy exhibits genuine curvature only transverse
to these faces, leading to quadratic - but not linear - rigidity.

This mechanism is intrinsic and persists in both classical and quantum settings.

\section{Example: Pure State Entropy Minimization}
\label{sec:6} 

In this section we provide a concrete example where the stability constants in Theorem~\ref{thm:entropy-stability} can be computed explicitly. The constraint set fixes block populations while allowing states within each block to vary freely, yielding entropy minimizers that are pure states in each block. This example illustrates the decomposition of the stability constant into marginal and conditional components and demonstrates the sharpness of the quadratic bound.

\subsection{Setup}

Let $\mathcal{H}$ be a finite-dimensional Hilbert space and suppose
\[
\mathcal{H} = \bigoplus_{i=1}^r \mathcal{H}_i
\]
is a fixed orthogonal decomposition into nontrivial subspaces, with $P_i$ denoting the orthogonal projection onto $\mathcal{H}_i$.

Define the block-diagonal state space
\[
\mathcal{S}_{\mathrm{bd}}(\mathcal{H}) = \left\{ \rho \in \mathcal{S}(\mathcal{H}) : \rho = \sum_{i=1}^r P_i \rho P_i \right\}.
\]
Such decompositions arise naturally from symmetry reductions, superselection rules, or coarse-grained descriptions of larger systems.

\subsection{Fixed Block Populations}

Fix a probability vector $q = (q_1, \ldots, q_r)$ with $q_i > 0$ and $\sum_i q_i = 1$. Consider the affine constraint set
\[
\mathcal{C}_q := \left\{ \rho \in \mathcal{S}_{\mathrm{bd}}(\mathcal{H}) : \mathrm{Tr}(P_i \rho) = q_i \text{ for all } i \right\}.
\]
This constraint fixes the population (total probability) in each block while allowing the internal state within each block to vary. The set $\mathcal{C}_q$ is block-convex in the sense of Definition~\ref{def:blockconvex}, with $\Pi = \{q\}$ (a single point) and $\mathcal{C}_i = \mathcal{S}(\mathcal{H}_i)$ (the full state space of block $i$).

\begin{proposition}[Entropy minimizers]
\label{prop:pure-minimizer}
The set of entropy minimizers in $\mathcal{C}_q$ consists precisely of the states
\[
\sigma_q=\bigoplus_{i=1}^r q_i\,|\psi_i\rangle\langle\psi_i|,
\]
where each $|\psi_i\rangle$ is an arbitrary unit vector in $\mathcal{H}_i$.
Equivalently, the minimizers are product states with pure states in each block, unique up to unitary transformations within each block.
\end{proposition}

\begin{proof}
For $\rho=\bigoplus_i q_i\rho_i\in\mathcal{C}_q$, entropy decomposes as
\[
S(\rho)=H(q)+\sum_i q_i S(\rho_i).
\]
Since $H(q)$ is constant on $\mathcal{C}_q$, entropy is minimized by minimizing each $S(\rho_i)$. The von Neumann entropy $S(\rho_i)$ is minimized when $\rho_i$ is a pure state (rank-one projection), giving $S(\rho_i) = 0$.
\end{proof}

\subsection{Quantitative Stability Estimate}

We now derive an explicit quantitative stability estimate in this setting.

\begin{theorem}[Stability for fixed block populations]
\label{thm:fixed-population-stability}
Let $q=(q_1,\dots,q_r)$ with $q_i>0$ and $\sum_i q_i=1$, and let $\mathcal{C}_q$ be as above. Let
\[
\mathcal{M}_q
:=
\left\{
\bigoplus_{i=1}^r q_i\,|\psi_i\rangle\langle\psi_i|:\ |\psi_i\rangle\in\mathcal{H}_i,\ \| \psi_i\|=1
\right\}
\]
denote the set of entropy minimizers in $\mathcal{C}_q$, and write
$\mathrm{dist}_1(\rho,\mathcal{M}_q):=\inf_{\sigma\in\mathcal{M}_q}\|\rho-\sigma\|_1$.

Then for every $\rho=\bigoplus_i q_i\rho_i\in\mathcal{C}_q$, the entropy gap satisfies
\[
S(\rho)-S_{\min}=\sum_{i=1}^r q_i\,S(\rho_i),
\qquad S_{\min}=H(q),
\]
and the following quantitative implication holds: if $\mathrm{dist}_1(\rho,\mathcal{M}_q)\le 1$, then
\[
S(\rho)-S_{\min}
\;\ge\;
\frac{1}{2}\,\mathrm{dist}_1(\rho,\mathcal{M}_q)^2.
\]
Equivalently, whenever $S(\rho)-S_{\min}\le \frac{1}{2}$, one has
\[
\mathrm{dist}_1(\rho,\mathcal{M}_q)
\;\le\;
\sqrt{2\,(S(\rho)-S_{\min})}.
\]
\end{theorem}

\begin{proof}
The entropy decomposition follows from block-diagonality: for $\rho=\bigoplus_i q_i\rho_i$, one has $S(\rho)=H(q)+\sum_i q_i S(\rho_i)$.

For each block $i$, define the distance to the set of pure states:
\[
d_i:=\mathrm{dist}_1\bigl(\rho_i,\mathrm{Pure}(\mathcal{H}_i)\bigr)
=\inf_{\|\psi\|=1}\|\rho_i-|\psi\rangle\langle\psi|\|_1.
\]
By trace-norm additivity over orthogonal blocks,
\begin{equation}
\label{eq:block-dist-add}
\mathrm{dist}_1(\rho,\mathcal{M}_q)=\sum_{i=1}^r q_i\,d_i.
\end{equation}

Let $\lambda_i:=\lambda_{\max}(\rho_i)$ be the largest eigenvalue of $\rho_i$. Choosing $|\psi_i\rangle$ to be an eigenvector corresponding to $\lambda_i$ gives $d_i=2(1-\lambda_i)$. 

By Schur concavity of entropy, among density matrices with fixed largest eigenvalue $\lambda_i$, the entropy $S(\rho_i)$ is minimized when the remaining probability mass $1-\lambda_i$ is concentrated on a single orthogonal eigenvector. This gives the lower bound
\[
S(\rho_i)\ \ge\ h(1-\lambda_i),
\qquad
h(t):=-(1-t)\log(1-t)-t\log t,
\]
where $h$ is the binary entropy function.

If $d_i\le 1$, then $t_i:=1-\lambda_i=d_i/2\in[0,\frac{1}{2}]$. On this interval, $h$ satisfies the quadratic lower bound
\[
h(t)\ \ge\ 2t^2
\qquad \text{for } 0\le t\le \frac{1}{2},
\]
which follows from $h(0)=h'(0)=0$ and $h''(t)=\frac{1}{t(1-t)}\ge 4$ on $[0,\frac{1}{2}]$. 

Therefore, whenever $d_i\le 1$,
\[
S(\rho_i)\ \ge\ \frac{1}{2}\,d_i^2.
\]

Assuming $\mathrm{dist}_1(\rho,\mathcal{M}_q)\le 1$, we have $\sum_i q_i d_i\le 1$, which implies $d_i\le 1$ for all $i$. Summing the previous inequality with weights $q_i$ yields
\[
S(\rho)-S_{\min}
=\sum_i q_i S(\rho_i)
\ \ge\ \frac{1}{2}\sum_i q_i d_i^2
\ \ge\ \frac{1}{2}\left(\sum_i q_i d_i\right)^2,
\]
where the last step uses Jensen's inequality for the convex function $x\mapsto x^2$. Using~\eqref{eq:block-dist-add} gives the claimed bound. The final implication is immediate.
\end{proof}

\subsection{Explicit Constant in the Uniform Case}

The stability constant can be computed explicitly when all block populations are equal.

Assume that $q_i=\frac{1}{r}$ for all $i$. Then the entropy minimizers are
\[
\sigma_q=\frac{1}{r}\bigoplus_{i=1}^r |\psi_i\rangle\langle\psi_i|.
\]

\begin{corollary}[Uniform block populations]
\label{cor:explicit-constant}
Assume that $q_i=\frac{1}{r}$ for all $i$. Then for all $\rho\in\mathcal{C}_q$ with $\mathrm{dist}_1(\rho,\mathcal{M}_q)\le 1$,
\[
S(\rho)-S_{\min}
\;\ge\;
\frac{1}{2}\,\mathrm{dist}_1(\rho,\mathcal{M}_q)^2.
\]
In particular, if $S(\rho)-S_{\min}\le \frac{1}{2}$, then
\[
\mathrm{dist}_1(\rho,\mathcal{M}_q)
\;\le\;
\sqrt{2\,(S(\rho)-S_{\min})}.
\]
\end{corollary}

\begin{proof}
This follows immediately from Theorem~\ref{thm:fixed-population-stability} specialized to the uniform marginal distribution $q$.
\end{proof}

\subsection{Optimality of the Quadratic Exponent}

The inequality in Theorem~\ref{thm:fixed-population-stability} gives
\[
\mathrm{dist}_1(\rho,\mathcal{M}_q)
\;\le\;
\sqrt{2\,(S(\rho)-S_{\min})}.
\]
Thus, an entropy gap of order $\varepsilon$ implies the state is within $O(\sqrt{\varepsilon})$ in trace distance from the set of entropy minimizers. By Section~\ref{sec:optimality}, this quadratic exponent is optimal: no linear bound of the form
\[
\|\rho-\sigma_q\|_1\le C\,(S(\rho)-S_{\min})
\]
can hold uniformly, even in this simple example.

This demonstrates the sharpness of the quadratic stability principle and confirms that the $\sqrt{\varepsilon}$ distance scaling cannot be improved without additional structure.

\subsection{Relation to Standard Entropy Inequalities}

This stability result is not a direct consequence of standard entropy inequalities:

\begin{itemize}
\item \textbf{Pinsker's inequality} $D(\rho\|\sigma) \geq \frac{1}{2}\|\rho-\sigma\|_1^2$ provides a lower bound relating relative entropy to trace distance. In our setting, the reference state $\sigma_q$ emerges from constrained optimization rather than being fixed a priori. The constraint structure (fixed block populations) enables explicit computation of the stability constant.

\item \textbf{Entropy continuity bounds} (Fannes, Audenaert, Winter) provide global moduli of continuity for von Neumann entropy. Our result is a \emph{local} stability estimate near constrained minimizers, exploiting the specific geometry of block-separable constraints to obtain sharp constants.

\item \textbf{Strong convexity:} The von Neumann entropy is globally concave but not strongly concave. Quadratic stability emerges after restriction to the constraint set $\mathcal{C}_q$ and localization near minimizers in the fixed-support regime.
\end{itemize}

The explicit computation of the stability constant $C = \frac{1}{2}$ in Theorem~\ref{thm:fixed-population-stability} provides quantitative information beyond what standard entropy inequalities yield for this constrained optimization problem.

\section{Relation to Classical Entropy Inequalities}
\label{sec:7}

The stability inequality established in Theorem~\ref{thm:entropy-stability}
interacts with, but is fundamentally distinct from, several well-known entropy
inequalities in information theory and convex optimization. We briefly clarify
these relationships and emphasize the nonstandard features of the present result.

\subsection{Relation to Pinsker, continuity, and functional inequalities}

Pinsker's inequality \cite{Pinsker1964,Csiszar1967} provides a quantitative
relation between relative entropy and trace distance by comparing two fixed
states. Related continuity bounds for the von Neumann entropy were established by
Fannes~\cite{Fannes1973}, Audenaert~\cite{Audenaert2007}, and Winter~\cite{Winter2016},
yielding sharp global moduli of continuity for entropy as a function of states.

These inequalities are complementary in spirit to the present work: they control
either relative entropy with respect to a fixed reference state or uniform
continuity of entropy on the full state space. By contrast,
Theorem~\ref{thm:entropy-stability} treats the entropy functional itself as a
variational object over a structured constraint set, with no reference state
fixed a priori; the entropy minimizer instead emerges intrinsically from the
geometry of the constraint.

Log-Sobolev inequalities relate entropy to Dirichlet forms associated with Markov
semigroups and play a central role in entropy decay and mixing theory
\cite{KastoryanoTemme2013,CarlenMaas2017,Brannan2022}. The present stability
estimate is orthogonal in nature: it is purely static and geometric, involving
neither dynamics nor a prescribed equilibrium state. Related perspectives on
entropy inequalities and stability appear in classical convex analysis and
information theory; see, e.g.,
\cite{Kullback1959,Rockafellar1970,BonnansShapiro2000}.

\subsection{Quadratic growth and nonstandard features}

It is well known that the von Neumann entropy is concave but not strongly concave
on the full state space. Consequently, no global quadratic growth inequality can
hold without further structural assumptions.

The nonstandard feature of Theorem~\ref{thm:entropy-stability} is that quadratic
entropy--distance stability emerges locally near constrained minimizers after
restriction to block-convex constraint sets.
The entropy functional lacks global strong convexity, and the admissible set is
stratified; quadratic growth therefore does not hold a priori, but arises only
after confinement to a fixed-support stratum determined by the constraint geometry.

From an optimization viewpoint, the theorem establishes a quadratic growth (error
bound) property for entropy minimization on a structured convex set, with
constants depending only on the geometry of the constraint and not on the ambient
Hilbert space dimension.

This setting differs fundamentally from classical majorization or continuity
results, which describe qualitative extremality but do not provide quantitative
stability estimates near entropy minimizers.

\section{Computational remarks}
\label{sec:8}

The stability constant $C$ in Theorem~\ref{thm:entropy-stability} can be made
explicit for concrete constraint sets by combining marginal and conditional
second-order information.

\paragraph{Marginal contribution.}
The marginal constant $c_1$ from Lemma~\ref{lem:marginal-stability} is determined
by the strict concavity of Shannon entropy on faces of the marginal polytope
$\Pi$. For an extreme point $q\in\operatorname{ext}(\Pi)$, let $F$ denote the
minimal face containing $q$. The Hessian of
$H(p)=-\sum_i p_i\log p_i$ restricted to the tangent space of $F$ at $q$ is
negative definite, with eigenvalues given by $-1/q_i$ along admissible
directions. The constant $c_1$ is obtained by minimizing the smallest absolute
eigenvalue over all such faces $F$. For polytopes specified by a finite block
structure, this quantity can be computed explicitly or bounded from below.

\paragraph{Conditional contribution.}
The conditional constant $c_2$ from Lemma~\ref{lem:conditional-minimization}
depends on the Hessian of von Neumann entropy at the conditional minimizers
$\sigma_i\in\mathcal C_i$. Under the confining hypothesis, the entropy Hessian is
uniformly positive on admissible tangent directions, with a lower bound $m_i>0$
determined by the local spectral gap and geometry of $\mathcal C_i$. Setting
$c_2=\min_{i:q_i>0} m_i$, the overall stability constant takes the form
\[
C=\min\!\left\{c_1,\;\frac{c_2}{2d_{\max}}\right\},
\]
where $d_{\max}$ accounts for norm equivalence on conditional subspaces.

\paragraph{Remarks on computation.}
For constraint sets of moderate size, the constants $c_1$ and $c_2$ can be
evaluated numerically using standard eigenvalue routines. In higher-dimensional
settings, computable lower bounds follow from symmetry, block structure, or
sampling of admissible tangent directions. These considerations confirm that the
stability constant is not merely existential, but can be estimated effectively
from the geometry of the constraint.
Related computational aspects of entropy and convex optimization are discussed in
\cite{Watrous2018,Wilde2017}.

\section{Conclusion and outlook}
\label{sec:9}

We have established a second-order stability principle for entropy minimization
under block-separable convex constraints.
The main result, Theorem~\ref{thm:entropy-stability}, shows that under a confining
(fixed-support) hypothesis, the entropy gap controls the squared trace-norm
distance to the set of minimizers, with explicit constants determined by the
geometry of the constraint.

The analysis reveals a clear geometric mechanism underlying quadratic stability.
Although the von Neumann entropy lacks global strong convexity, genuine
second-order rigidity emerges locally near constrained minimizers once admissible
perturbations are confined to fixed-support strata.
The block-separable structure allows the stability problem to decompose into
independent marginal and conditional components, whose contributions are
governed by the Hessian curvature of Shannon and von Neumann entropy,
respectively.

We further show that the quadratic exponent is optimal: even in classical
(commutative) settings, no uniform linear or subquadratic stability bound can
hold near entropy minimizers.
This optimality confirms that the stability mechanism identified here is sharp
and intrinsic to the geometry of the constraint.

All results are obtained in finite dimension, where compactness and norm
equivalence ensure uniform control of stability constants.
Extensions to infinite-dimensional settings or to more general (non
block-separable) constraints would require substantially different techniques
and are left for future work.

Overall, this work provides a complete and sharp second-order stability theory
for entropy minimization under structured convex constraints, with both
conceptual and practical implications for optimization and information theory.

\appendix

\section{Domain of validity of quadratic entropy--distance stability}
\label{app:quadratic-domain}

This appendix describes geometric regimes that are \emph{explicitly excluded}
from Theorem~\ref{thm:entropy-stability} and explains why a purely quadratic
entropy--distance stability estimate cannot hold outside the confining
(fixed-support) hypothesis.

Its purpose is not to extend the main result, but to delineate sharply the
boundary of its applicability and to show that the failure of quadratic
stability in boundary--transversal regimes is structural and unavoidable.
This clarification renders the scope of the entropy-based arguments used
throughout the paper logically complete and internally consistent.

\subsection{Quadratic stability under fixed-support (confining) constraints}

Let $\mathcal C\subset \mathcal D(\mathcal H)$ be a convex constraint set and let
$\sigma\in\mathcal C$ be an entropy minimizer.
Denote by $P=\supp(\sigma)$ the support projection of $\sigma$.

\begin{lemma}[Quadratic entropy stability in the confining regime]
\label{lem:quadratic-confining}
Assume that there exists a neighbourhood $U$ of $\sigma$ in $\mathcal C$ such that
\[
\supp(\rho)\subseteq \supp(\sigma)
\qquad\text{for all }\rho\in U .
\]
Equivalently, admissible perturbations are confined to the fixed-support stratum
$\mathcal D(P\mathcal H)$.
Then there exist constants $c,C>0$ such that, for all $\rho\in U$,
\[
c\,\|\rho-\sigma\|_1^2
\;\le\;
S(\rho)-S(\sigma)
\;\le\;
C\,\|\rho-\sigma\|_1^2 .
\]
\end{lemma}

\begin{proof}[Proof sketch]
On the fixed-support stratum $\mathcal D(P\mathcal H)$, the von Neumann entropy $S$
is a $C^{2}$ functional. Since $\sigma$ is full rank on its support, that is,
$P\sigma P \succ 0$ as an operator on $P\mathcal H$, the Hessian of $f:=-S$ at
$\sigma$ is strictly positive on the admissible tangent directions compatible
with the constraint. A second-order Taylor expansion therefore yields the stated
local quadratic upper and lower bounds in trace norm.
\end{proof}

\subsection{Failure outside the confining regime}

The conclusion of Lemma~\ref{lem:quadratic-confining} relies crucially on the
fixed-support (confining) hypothesis.
If this assumption fails-specifically, if the entropy minimizer $\sigma$ is
rank-deficient and the constraint allows admissible perturbations to explore
directions transversal to the support-then the entropy Hessian degenerates at
$\sigma$.
In such boundary-touching, transversal regimes, the quadratic estimate above
is no longer valid.

In that case, the optimal entropy--distance modulus acquires an unavoidable
logarithmic correction, and one enters the quadratic--logarithmic rigidity regime
analysed in the main body of this work. This distinction explains why purely
quadratic entropy-based stability estimates cannot hold uniformly across all
symmetry-restricted settings.

\subsection{Relation to the main results}

Lemma~\ref{lem:quadratic-confining} isolates the geometric regime in which
classical quadratic entropy--distance estimates are valid.
The main rigidity theorem addresses the complementary geometric regime in which the minimizer
remains confined, while boundary--transversal regimes are treated elsewhere.

Together, these results provide a complete geometric picture of entropy
stability near constrained minimizers.

\section{Standard Entropy Facts}
\label{app:standard-ent-facts}
For the reader’s convenience, we collect in this appendix several standard
entropy identities and inequalities used in the main text. All results are
classical; proofs are included only where brevity permits or where precise
hypotheses are required for later arguments.

Throughout, $\mathcal H$ denotes a finite-dimensional Hilbert space, and
$\mathcal S(\mathcal H)$ the set of positive trace-one operators on $\mathcal H$.

\subsection{Von Neumann entropy}

For $\rho\in\mathcal S(\mathcal H)$, the von Neumann entropy is defined by
\[
S(\rho):=-\Tr(\rho\log\rho),
\]
with the convention $0\log0=0$.

\begin{lemma}[Lower semicontinuity]
\label{app:lem:lsc}
The map $\rho\mapsto S(\rho)$ is lower semicontinuous with respect to the trace
norm.
\end{lemma}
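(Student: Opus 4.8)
The plan is to establish the \emph{continuity} of $S$ in trace norm, which is stronger than the asserted lower semicontinuity and is available precisely because $\HH$ is finite-dimensional; lower semicontinuity then follows a fortiori. The mechanism is that the von Neumann entropy is the trace of a single fixed continuous function applied to $\rho$ through the functional calculus, and both the functional calculus (for a continuous scalar function) and the trace are continuous maps.

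First I would write $S(\rho)=\Tr\big(h(\rho)\big)$, where $h(x):=-x\log x$ with the convention $h(0)=0$. The point to record is that $h$ is \emph{continuous} on the compact interval $[0,1]$: although $h'(x)=-\log x-1\to+\infty$ as $x\to0^+$, the value satisfies $h(x)\to0=h(0)$, so the singularity lives only in the derivative and does not obstruct continuity. Since every $\rho\in\mathcal S(\HH)$ is positive with $\Tr(\rho)=1$, its spectrum lies in $[0,1]$, so $h(\rho)$ is well defined by the spectral theorem and indeed $S(\rho)=\Tr\big(h(\rho)\big)$.

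Next I would invoke norm equivalence together with continuity of the functional calculus. In the fixed finite dimension $d=\dim\HH$ all norms on $\mathcal B(\HH)$ are equivalent, so convergence $\rho_n\to\rho$ in $\|\cdot\|_1$ coincides with convergence in operator norm. For a continuous scalar function on a compact interval containing the spectra of a uniformly bounded family of self-adjoint operators, the induced map $A\mapsto h(A)$ is continuous in operator norm; equivalently, writing $\lambda_1(\rho)\ge\cdots\ge\lambda_d(\rho)$ for the sorted eigenvalues, Weyl's perturbation inequality shows each $\rho\mapsto\lambda_j(\rho)$ is continuous, and $S(\rho)=\sum_{j=1}^d h\big(\lambda_j(\rho)\big)$ is then a finite sum of compositions of continuous maps. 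Composing with the (trace-norm continuous) trace yields continuity of $S$, hence its lower semicontinuity.

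The only place demanding genuine care is the boundary of the state space, where one or more eigenvalues vanish — precisely where the defining function $h$ fails to be differentiable. I would stress that continuity of $h$ at $0$, guaranteed by the convention $0\log0=0$, is exactly what rescues the argument there, so no separate treatment of rank-deficient states is required. Finally I would note why the weaker conclusion of lower semicontinuity is the one recorded: it is all that the compactness arguments of Section~3 actually invoke (for instance in Lemma~\ref{lem:minimizer-compact}), and it is the property expected to survive in the infinite-dimensional, finitely-blocked settings flagged as future work, where full continuity of $S$ can fail.
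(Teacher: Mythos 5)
Your proof is correct, but it takes a different route from the paper. The paper's own proof is a one-line appeal to Fatou's lemma applied to the spectral decomposition (with citations to Wehrl and Simon) --- this is the classical argument that establishes lower semicontinuity of $S$ even in infinite dimensions, where the entropy is genuinely \emph{not} continuous, and it delivers exactly the stated conclusion and nothing more. You instead exploit the standing finite-dimensionality assumption to prove the strictly stronger statement that $S$ is \emph{continuous}: equivalence of norms identifies trace-norm and operator-norm convergence, continuity of $h(x)=-x\log x$ on $[0,1]$ (or, equivalently, Weyl's perturbation inequality for the sorted eigenvalues) makes $\rho\mapsto h(\rho)$ norm-continuous, and composing with the trace finishes the argument. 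Both proofs are sound. What your approach buys is an elementary, self-contained argument with a stronger conclusion, and your emphasis on the behaviour of $h$ at $0$ is exactly the right point to flag; what the paper's approach buys is robustness --- the Fatou argument is the one that survives the infinite-dimensional, finitely-blocked extensions the paper lists as future directions, a limitation of your method that you correctly identify yourself in your closing remark.
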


\begin{proof}
This follows from Fatou’s lemma applied to the spectral decomposition of $\rho$.
See \cite{Wehrl1978,Simon1979}.
\end{proof}

\subsection{Entropy of block-diagonal states}

Let $\{P_i\}_{i=1}^r$ be mutually orthogonal projections with $\sum_i P_i=I$.
A state $\rho$ is block-diagonal with respect to $\{P_i\}$ if
\[
\rho=\sum_{i=1}^r P_i\rho P_i.
\]

Define $\mu(i):=\Tr(P_i\rho)$ and, when $\mu(i)>0$,
\[
\rho_i:=\frac{P_i\rho P_i}{\mu(i)}.
\]

\begin{lemma}[Entropy decomposition]
\label{app:lem:entropy-decomp}
If $\rho$ is block-diagonal with respect to $\{P_i\}$, then
\[
S(\rho)=H(\mu)+\sum_{i=1}^r \mu(i)\,S(\rho_i),
\]
where $H(\mu)=-\sum_i\mu(i)\log\mu(i)$ is the Shannon entropy.
\end{lemma}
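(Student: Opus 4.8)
The plan is to reduce the identity to a single spectral computation, using block-diagonality to read off the spectrum of $\rho$ directly from the spectra of the conditional states $\rho_i$; this is the detailed version of the ``direct calculation'' invoked for Lemma~\ref{lem:entropy-decomposition}. First I would fix, for each index $i$ with $\mu(i)>0$, a spectral decomposition $\rho_i=\sum_{j}\lambda_{ij}\,|e_{ij}\rangle\langle e_{ij}|$ in which the eigenvectors $|e_{ij}\rangle$ form an orthonormal basis of the subspace $P_i\mathcal H$. The decisive structural observation is that, because the ranges of the $P_i$ are mutually orthogonal and $\rho$ acts as $\mu(i)\rho_i$ on $P_i\mathcal H$ while annihilating the complementary blocks, the collection $\{|e_{ij}\rangle\}_{i,j}$ is an orthonormal eigenbasis for $\rho$ as a whole, with eigenvalue $\mu(i)\lambda_{ij}$ attached to $|e_{ij}\rangle$. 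This is the only nontrivial input, and it is immediate from orthogonality of the blocks together with $\sum_i P_i=I$; no further operator-theoretic machinery is required.

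With the full spectrum available, I would evaluate the von Neumann entropy directly from its eigenvalues and split the logarithm of the product:
\[
S(\rho)=-\sum_{i,j}\mu(i)\lambda_{ij}\log\!\big(\mu(i)\lambda_{ij}\big)=-\sum_{i,j}\mu(i)\lambda_{ij}\log\mu(i)-\sum_{i,j}\mu(i)\lambda_{ij}\log\lambda_{ij}.
\]
For the first sum I would use the normalization $\sum_j\lambda_{ij}=\Tr(\rho_i)=1$, which collapses the inner sum and leaves $-\sum_i\mu(i)\log\mu(i)=H(\mu)$. For the second sum, holding $i$ fixed isolates exactly $-\sum_j\lambda_{ij}\log\lambda_{ij}=S(\rho_i)$, weighted by $\mu(i)$, which gives $\sum_i\mu(i)S(\rho_i)$. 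Summing the two contributions yields the claimed decomposition.

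The one point requiring care --- and the only place where the statement is not purely formal --- is the treatment of the convention $0\log0=0$ and of blocks with vanishing weight. For an index $i$ with $\mu(i)=0$, the conditional state $\rho_i$ is undefined, but such a block contributes no nonzero eigenvalues to $\rho$ and hence nothing to the entropy, while the term $\mu(i)S(\rho_i)$ on the right-hand side vanishes regardless of how $\rho_i$ is interpreted; I would simply discard these indices at the outset. Within a block with $\mu(i)>0$, any zero eigenvalues $\lambda_{ij}=0$ produce terms read as $0\log0=0$ in both $S(\rho)$ and $S(\rho_i)$, so the splitting of the logarithm above is valid termwise once the convention is applied consistently. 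I do not anticipate a genuine obstacle here: the identity is an elementary consequence of the additive splitting of $\log(\mu(i)\lambda_{ij})$ combined with the normalization of each conditional state, and finite-dimensionality keeps all sums finite.
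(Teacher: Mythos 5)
Your proposal is correct and is exactly the ``direct calculation'' that the paper's one-line proof gestures at: reading the spectrum of $\rho$ off the blockwise spectra as $\{\mu(i)\lambda_{ij}\}$, splitting $\log(\mu(i)\lambda_{ij})$, and using $\sum_j\lambda_{ij}=1$. Your handling of zero-weight blocks and the $0\log 0=0$ convention is a welcome extra precision not spelled out in the paper.
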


\begin{proof}
Since the blocks act on orthogonal subspaces, the spectrum of $\rho$ is the union
of the spectra of the operators $\mu(i)\rho_i$. The claim follows by direct
calculation.
\end{proof}

\subsection{Elementary entropy bounds}

We record several standard inequalities relating Shannon entropy to extremal
probabilities. All statements below are classical; we include proofs for
completeness and to clarify precise constants.

\begin{lemma}[Entropy upper bound]
\label{app:lem:entropy-upper}
Let $\mu=(\mu_1,\dots,\mu_r)$ be a probability distribution on $r<\infty$ points.
Then
\[
H(\mu)\le \log r,
\]
with equality if and only if $\mu_i=\frac1r$ for all $i$.
\end{lemma}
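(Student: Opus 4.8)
The plan is to deduce the bound from the nonnegativity of relative entropy, which has the advantage of settling both the inequality and its equality case in one stroke using facts already recorded in Section~2. First I would introduce the uniform distribution $u=(1/r,\dots,1/r)$, regarded as the classical state $\tfrac1r I$, and compute the relative entropy of $\mu$ with respect to $u$:
\[
D(\mu\|u)=\sum_{i=1}^r \mu_i\log\frac{\mu_i}{1/r}
=\sum_{i=1}^r\mu_i\log\mu_i+\log r\sum_{i=1}^r\mu_i
=\log r-H(\mu),
\]
where the last step uses $\sum_i\mu_i=1$ and the convention $0\log0=0$. Invoking nonnegativity of relative entropy, $D(\mu\|u)\ge0$, the displayed identity immediately rearranges to $H(\mu)\le\log r$.

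For the equality case I would use the sharper property that $D(\mu\|u)=0$ if and only if $\mu=u$. Thus $H(\mu)=\log r$ forces $\mu_i=1/r$ for every $i$, and conversely the uniform distribution attains the bound, again by the identity above. This is the whole content of the lemma; the relative-entropy formulation is attractive precisely because the equality characterization is inherited directly rather than reconstructed by hand.

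An alternative, fully self-contained route applies Jensen's inequality to the concave function $\log$: writing $H(\mu)=\sum_i\mu_i\log(1/\mu_i)$, concavity gives $H(\mu)\le\log\bigl(\sum_{i:\mu_i>0}1\bigr)\le\log r$. The only genuinely delicate point in the whole argument—and the one I would flag as the main obstacle—is the equality analysis, which along this second route splits into two separate conditions: constancy of $\mu_i$ on its support (equality in Jensen) together with fullness of the support (equality in the final bound). The relative-entropy argument packages both conditions automatically, so I would present that one as the primary proof and mention Jensen only as a remark.
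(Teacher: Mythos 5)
Your proposal is correct, and it does more than the paper does: the paper's proof of Lemma~\ref{app:lem:entropy-upper} is a one-line appeal to the classical fact that $H$ is strictly concave on $\Delta_r$ and maximized uniquely at the uniform distribution, with no computation. Your primary argument via the identity $D(\mu\|u)=\log r-H(\mu)$ for the uniform distribution $u$ is a genuine, self-contained derivation; since $u$ has full support the relative entropy is finite, the computation is valid under the convention $0\log 0=0$, and the nonnegativity and vanishing criterion for $D$ that you invoke are exactly the properties recorded in Section~2, so nothing circular is being used. As you note, the main payoff of this route is that the equality case ($H(\mu)=\log r$ iff $\mu=u$) falls out of $D(\mu\|u)=0$ iff $\mu=u$ in a single step, whereas the paper's concavity argument leaves the uniqueness claim implicit and your Jensen alternative would require you to track two separate equality conditions (constancy on the support and fullness of the support). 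Either of your two routes would be an acceptable replacement for the paper's proof; the relative-entropy one is the cleaner choice for the stated equality characterization.
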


\begin{proof}
This is classical. The Shannon entropy is strictly concave on the simplex
$\Delta_r$ and is maximized uniquely at the uniform distribution.
\end{proof}

\begin{lemma}[Maximum mass lower bound]
\label{app:lem:max-mass}
Let $\mu=(\mu_1,\dots,\mu_r)$ be a probability distribution. Then
\[
\max_i \mu_i \ge e^{-H(\mu)}.
\]
\end{lemma}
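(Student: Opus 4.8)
The plan is to prove the equivalent logarithmic form $H(\mu) \ge -\log\bigl(\max_i \mu_i\bigr)$, which rearranges immediately to the stated bound upon exponentiating. Writing $M := \max_i \mu_i$, the inequality $e^{-H(\mu)} \le M$ is equivalent to $-H(\mu) \le \log M$, i.e.\ $H(\mu) \ge -\log M$, so the whole statement reduces to establishing this single lower bound on the Shannon entropy.

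The key observation is a termwise comparison. Since $M$ is the maximal component, every index satisfies $\mu_i \le M$, and hence $-\log \mu_i \ge -\log M$ for each $i$ with $\mu_i > 0$ (indices with $\mu_i = 0$ contribute nothing to $H(\mu)$ under the convention $0\log 0 = 0$ and may simply be discarded). Multiplying this pointwise inequality by the nonnegative weight $\mu_i$ and summing over all $i$ with $\mu_i > 0$ gives
\[
H(\mu)
=
\sum_{i:\,\mu_i>0} \mu_i\,(-\log \mu_i)
\;\ge\;
\sum_{i:\,\mu_i>0} \mu_i\,(-\log M)
=
(-\log M)\sum_{i} \mu_i
=
-\log M,
\]
where the final equality uses $\sum_i \mu_i = 1$. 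This is exactly the desired bound $H(\mu) \ge -\log M$.

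There is no genuine obstacle here: the argument is a one-line monotonicity estimate, and the only point requiring minor care is the handling of vanishing components, which the $0\log 0 = 0$ convention resolves cleanly. I would note in passing that equality holds precisely when $\mu_i \in \{0, M\}$ for every $i$, i.e.\ when $\mu$ is uniform on its support, since that is the only case in which $-\log\mu_i = -\log M$ for all $i$ carrying positive mass.
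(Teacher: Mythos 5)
Your proof is correct and follows essentially the same route as the paper's: bound each term via $\mu_i \le M$ and sum to get $H(\mu) \ge -\log M$. In fact your version is cleaner, since the paper's displayed chain has the inequality sign reversed (it writes $H(\mu)\le -\sum_i\mu_i\log M$ where the correct termwise comparison gives $\ge$), while yours carries the direction correctly and also handles the $\mu_i=0$ terms and the equality case explicitly.
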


\begin{proof}
Let $M:=\max_i\mu_i$. Then
\[
H(\mu)=-\sum_i\mu_i\log\mu_i
\le -\sum_i\mu_i\log M
= -\log M,
\]
which implies $M\ge e^{-H(\mu)}$.
\end{proof}

\begin{remark}
The bound in Lemma~\ref{app:lem:max-mass} is sharp, with equality attained for
distributions supported on a single point.
\end{remark}

\begin{lemma}[Support-size refined bound]
\label{app:lem:max-mass-refined}
If $\mu$ is supported on at most $r$ points, then
\[
\max_i \mu_i \ge \frac{e^{-H(\mu)}}{r}.
\]
\end{lemma}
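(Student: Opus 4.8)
The plan is to obtain the bound as an immediate weakening of Lemma~\ref{app:lem:max-mass}, and then to record an alternative self-contained derivation that does not invoke it. For the first route, I would simply note that Lemma~\ref{app:lem:max-mass} already gives $\max_i \mu_i \ge e^{-H(\mu)}$, and since $r \ge 1$ we have $e^{-H(\mu)}/r \le e^{-H(\mu)}$; chaining these two inequalities yields $\max_i \mu_i \ge e^{-H(\mu)}/r$ with no further work. In this reading the statement is a deliberately weaker form of the maximum-mass bound, normalized by the support size, and it is convenient precisely for later arguments that carry an explicit factor of $r$.

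For a proof that is logically independent of Lemma~\ref{app:lem:max-mass}, I would argue by a pigeonhole estimate together with nonnegativity of entropy. Writing $M := \max_i \mu_i$ and letting $k \le r$ denote the size of the support, the masses $\mu_i$ sum to $1$ over $k$ points, so the maximum dominates the average: $M \ge 1/k \ge 1/r$. On the other hand, Shannon entropy is nonnegative, $H(\mu) \ge 0$, whence $e^{-H(\mu)} \le 1$ and therefore $e^{-H(\mu)}/r \le 1/r \le M$. This yields the claim directly from the two elementary facts $\max_i \mu_i \ge 1/r$ and $e^{-H(\mu)} \le 1$, using no entropy inequality beyond nonnegativity.

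I do not expect any genuine obstacle here: both routes are one-line deductions, and the only point requiring care is conceptual rather than technical, namely recognizing that the inequality is strictly weaker than Lemma~\ref{app:lem:max-mass} for $r \ge 2$, the factor $1/r$ never being saturated except in the degenerate case $r = 1$. The value of stating it separately is that the explicit $1/r$ normalization matches the support-size bookkeeping used elsewhere in the text, so I would present the short deduction from Lemma~\ref{app:lem:max-mass} as the main proof and relegate the pigeonhole argument to a remark for readers who prefer a self-contained statement.
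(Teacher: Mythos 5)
Your proposal is correct and follows essentially the same route as the paper: the paper's own proof establishes both $\max_i\mu_i\ge e^{-H(\mu)}$ (Lemma~\ref{app:lem:max-mass}) and $\max_i\mu_i\ge 1/r$ (pigeonhole), then combines them via $\max(e^{-H(\mu)},1/r)\ge e^{-H(\mu)}/r$. Your observation that either ingredient alone suffices --- dividing the max-mass bound by $r\ge1$, or pairing the pigeonhole bound with $e^{-H(\mu)}\le1$ --- is accurate and slightly streamlines the paper's argument.
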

The purely quadratic estimate used here is valid only in the fixed-support
(confining) regime; see Appendix~\ref{app:quadratic-domain} for a precise
geometric formulation and proof.

\begin{proof}
By Lemma~\ref{app:lem:max-mass}, $\max_i\mu_i\ge e^{-H(\mu)}$.
Since $\sum_i\mu_i=1$ and there are at most $r$ nonzero terms,
\[
1\le r\cdot\max_i\mu_i,
\]
which implies $\max_i\mu_i\ge \frac1r$. Combining both bounds yields
\[
\max_i\mu_i \ge \max\!\left(e^{-H(\mu)},\frac1r\right)
\ge \frac{e^{-H(\mu)}}{r}.
\]
\end{proof}

\begin{remark}
Lemma~\ref{app:lem:max-mass-refined} is weaker than
Lemma~\ref{app:lem:max-mass} for small entropy, but is convenient when an
explicit dependence on support size is needed.
\end{remark}

\subsection{Trace norm and block structure}

\begin{lemma}[Blockwise trace norm]
\label{app:lem:block-trace}
If $\rho,\sigma$ are block-diagonal with respect to $\{P_i\}$, then
\[
\|\rho-\sigma\|_1=\sum_{i=1}^r \|P_i(\rho-\sigma)P_i\|_1.
\]
\end{lemma}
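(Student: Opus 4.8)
The plan is to reduce everything to the single algebraic fact that the functional calculus respects the orthogonal block decomposition. Set $X := \rho - \sigma$. Since $\rho$ and $\sigma$ are both block-diagonal, so is their difference, and since both are self-adjoint, $X$ is self-adjoint. Writing $X_i := P_i X P_i$ for the restriction of $X$ to $\mathcal H_i$, I would first record the direct-sum structure $X = \bigoplus_{i=1}^r X_i$, where each $X_i$ is a self-adjoint operator on $\mathcal H_i$ and the subspaces $\mathcal H_i$ are mutually orthogonal.

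The key step is to compute $|X|$. Because the blocks act on orthogonal subspaces, the cross terms $P_i X^* X P_j$ vanish for $i\neq j$, so $X^*X = \bigoplus_i X_i^* X_i$; taking the positive square root block by block gives $|X| = (X^*X)^{1/2} = \bigoplus_i |X_i|$. Concretely, this is justified by combining the spectral decompositions of the individual blocks: every eigenvector of $X_i$, extended by zero on the complementary subspaces, is an eigenvector of $X$ with the same eigenvalue, and these exhaust the spectrum of $X$. Thus the functional calculus applied to $X$ decomposes as the direct sum of the functional calculi applied to the $X_i$.

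Finally, I would invoke additivity of the trace over an orthogonal direct sum: since $|X| = \bigoplus_i |X_i|$ is block-diagonal with blocks supported on the orthogonal subspaces $\mathcal H_i$,
\[
\|X\|_1 = \Tr|X| = \sum_{i=1}^r \Tr|X_i| = \sum_{i=1}^r \|P_i(\rho-\sigma)P_i\|_1,
\]
which is exactly the claimed identity.

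I expect no serious obstacle here; the only point requiring care is the assertion $|X| = \bigoplus_i |X_i|$, i.e.\ that the absolute value commutes with the block decomposition. This is immediate once one observes that $X^*X$ is itself block-diagonal, so its positive square root is computed independently on each block. I would note in passing that the inequality $\|X\|_1 \le \sum_i \|X_i\|_1$ follows already from the triangle inequality applied to $X = \sum_i X_i$, so the genuine content of the lemma is the reverse inequality, which is precisely what the absence of off-diagonal interference between orthogonal blocks supplies.
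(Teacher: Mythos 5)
Your proof is correct and is essentially the paper's argument in expanded form: the paper simply asserts that the trace norm is additive over orthogonal direct sums, and you supply the justification by showing $X^*X$ is block-diagonal, hence $|X|=\bigoplus_i|X_i|$, and then using additivity of the trace. No further comment is needed.
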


\begin{proof}
The operators act on orthogonal subspaces, and the trace norm is additive over
orthogonal direct sums.
\end{proof}

\subsection{Pinsker inequality}

\begin{lemma}[Pinsker]
\label{app:lem:pinsker}
For $\rho,\sigma\in\mathcal S(\mathcal H)$ with
$\supp(\rho)\subseteq\supp(\sigma)$,
\[
D(\rho\|\sigma)\ge \tfrac12\|\rho-\sigma\|_1^2.
\]
\end{lemma}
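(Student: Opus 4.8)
The plan is to reduce the quantum inequality to the classical binary case by passing through an optimally chosen two-outcome measurement, and then to settle the classical case by a direct one-variable calculus estimate. The only quantum inputs required are the operational (Helstrom) description of the trace norm and the joint convexity of relative entropy, the latter already recorded in the preliminaries. First I would take the Jordan decomposition $\rho-\sigma=\Delta_+-\Delta_-$ into its positive and negative parts and let $P$ be the spectral projection onto $\ran(\Delta_+)$, so that $\|\rho-\sigma\|_1=2\,\Tr\!\big(P(\rho-\sigma)\big)$. Measuring the projective POVM $\{P,\,I-P\}$ sends $\rho$ and $\sigma$ to the binary distributions $p=(a,1-a)$ and $q=(b,1-b)$ with $a=\Tr(P\rho)$ and $b=\Tr(P\sigma)$, and by the choice of $P$ the classical total variation distance matches the quantum one exactly,
\[
\|p-q\|_1=2\,|a-b|=2\,\Tr\!\big(P(\rho-\sigma)\big)=\|\rho-\sigma\|_1.
\]
The support hypothesis $\supp(\rho)\subseteq\supp(\sigma)$ forces $b=0\Rightarrow a=0$, so $q$ dominates $p$ and the classical relative entropy stays finite.

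Next I would establish the data-processing step $D(\rho\|\sigma)\ge D(p\|q)$ through the pinching $\mathcal P(X)=PXP+(I-P)X(I-P)$. Writing $U=2P-I$, which is a self-inverse unitary, one has the identity $\mathcal P(X)=\tfrac12\big(X+UXU^{*}\big)$, so joint convexity and unitary invariance of relative entropy give
\[
D\!\big(\mathcal P(\rho)\,\|\,\mathcal P(\sigma)\big)\le\tfrac12 D(\rho\|\sigma)+\tfrac12 D\!\big(U\rho U^{*}\,\|\,U\sigma U^{*}\big)=D(\rho\|\sigma).
\]
Since $\mathcal P(\rho)$ and $\mathcal P(\sigma)$ are block-diagonal in the $\{P,I-P\}$ splitting, the block decomposition of relative entropy yields $D(\mathcal P(\rho)\|\mathcal P(\sigma))=D(p\|q)+a\,D(\rho_+\|\sigma_+)+(1-a)\,D(\rho_-\|\sigma_-)\ge D(p\|q)$, using nonnegativity of the conditional terms. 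Chaining the two displays gives $D(\rho\|\sigma)\ge D(p\|q)$.

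It then remains to prove the classical binary Pinsker bound $D(p\|q)\ge\tfrac12\|p-q\|_1^2=2(a-b)^2$. Fixing $b$ and setting $\psi(a):=D(p\|q)-2(a-b)^2$, I would verify $\psi(b)=0$, $\psi'(b)=0$, and
\[
\psi''(a)=\frac{1}{a(1-a)}-4\ge 0,
\]
the last inequality holding because $a(1-a)\le\tfrac14$ on $[0,1]$; hence $b$ is a global minimum of the convex function $\psi$ and $\psi\ge 0$. Combining with the previous paragraph yields $D(\rho\|\sigma)\ge D(p\|q)\ge\tfrac12\|p-q\|_1^2=\tfrac12\|\rho-\sigma\|_1^2$. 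The step demanding the most care is the monotonicity argument: it is the only genuinely non-commutative point, and its correctness hinges entirely on joint convexity of $(\rho,\sigma)\mapsto D(\rho\|\sigma)$. Since that property is already available, the remaining effort is the Jordan-decomposition bookkeeping guaranteeing $\|p-q\|_1=\|\rho-\sigma\|_1$ and the elementary second-derivative estimate above; I expect no further obstacle.
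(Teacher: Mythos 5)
Your proof is correct, but it is genuinely different from what the paper does: the paper does not prove this lemma at all, it simply cites Csisz\'ar and Petz. You supply a self-contained argument in the standard measurement-reduction style: the Jordan/Helstrom identity $\|\rho-\sigma\|_1=2\Tr\bigl(P(\rho-\sigma)\bigr)$ guarantees that the two-outcome measurement $\{P,I-P\}$ preserves the trace distance exactly; the pinching trick $\mathcal P(X)=\tfrac12(X+UXU^{*})$ with $U=2P-I$ reduces data processing to joint convexity plus unitary invariance (both asserted in the paper's preliminaries); and the binary classical case follows from the second-derivative estimate $\tfrac1{a(1-a)}\ge4$. All steps check out, including the observation that $\supp(\rho)\subseteq\supp(\sigma)$ makes $q$ dominate $p$ (you state this only for the outcome $b=0$, but the symmetric case $1-b=0\Rightarrow 1-a=0$ follows the same way and should be mentioned for completeness). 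What your route buys is a proof whose only non-elementary ingredient is joint convexity of relative entropy --- itself a deep theorem (Lieb/Lindblad--Uhlmann), but one the paper already takes as given --- so the lemma becomes genuinely self-contained relative to the paper's stated toolkit; what the citation buys is brevity and access to the sharp constant. The block decomposition of relative entropy for $\{P,I-P\}$-block-diagonal states that you invoke in the data-processing step is used without proof, but it is a direct computation entirely parallel to the paper's Lemma~\ref{lem:entropy-decomposition} and is not a gap.
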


\begin{proof}
See \cite{Csiszar1967,Petz2001}.
\end{proof}

\subsection{Use within this work}

These results are used explicitly as quantitative entropy inequalities in the proofs of the main theorems:
\begin{itemize}
\item Lemma~\ref{app:lem:entropy-decomp} underlies the entropy decomposition in
Sections~\ref{sec:4} and~\ref{sec:5}.
\item Lemma~\ref{app:lem:max-mass-refined} is used to bound entropy gaps.
\item Lemma~\ref{lem:entropy-purity} yields purity-based lower bounds.
\item Lemma~\ref{app:lem:block-trace} is used to control trace distances between
block-diagonal states.
\item Lemma~\ref{app:lem:pinsker} is used for relative-entropy comparisons.
\end{itemize}

No further nontrivial entropy inequalities beyond standard properties are
invoked at the level of quantitative estimates. The boundary--transversal
regime, which leads to logarithmic corrections, is treated separately and will
be discussed elsewhere.

\section*{Data Availability}
No datasets were generated or analysed during the current study. All results are derived analytically.
\section*{Conflict of Interest}
The author declares that there is no conflict of interest regarding the publication of this article.

\subsection*{Funding}
The author received no external funding.

\subsection*{Code availability}
Not applicable.

\bibliographystyle{siamplain}
\bibliography{references}

\end{document}